\newcolumntype{L}{>{$}l<{$}}
\newtheorem{theorem}{Theorem}[section]
\newtheorem{lemma}[theorem]{Lemma}
\newcommand{\tr}{\text{Tr}}
\newcommand{\bra}[1]{\mathinner{\langle{#1}\rvert}}
\newcommand{\ket}[1]{\mathinner{\lvert{#1}\rangle}}
\newcommand{\braket}[1]{\mathinner{\langle{#1}\rangle}}
\newcommand{\ketbra}[2]{\ket{#1}\bra{#2}}
\newsavebox{\@brx}
\newcommand{\lAngle}[1][]{\savebox{\@brx}{\(\m@th{#1\langle}\)}%
    \mathopen{\copy\@brx\kern-0.5\wd\@brx\usebox{\@brx}}}
\newcommand{\rAngle}[1][]{\savebox{\@brx}{\(\m@th{#1\rangle}\)}%
    \mathclose{\copy\@brx\kern-0.5\wd\@brx\usebox{\@brx}}}
\newsavebox{\@strut}
\newcommand{\BBra}[1]{%
    \sbox{\@strut}{\(#1\)}%
    \mathinner{\left\langle\kern-0.3\ht\@strut\left\langle{#1}\right|\right.}%
}
\newcommand{\KKet}[1]{%
    \sbox{\@strut}{\(#1\)}%
    \mathinner{\left.\left|{#1}\right\rangle\kern-0.3\ht\@strut\right\rangle}%
}
\newcommand{\BBraket}[1]{%
    \sbox{\@strut}{\(#1\)}%
    \mathinner{\left\langle\kern-0.3\ht\@strut\left\langle{#1}\right\rangle\kern-0.3\ht\@strut\right\rangle}%
}
\newcommand{\KKetbra}[2]{%
    \sbox{\@strut}{\(#1#2\)}%
    \mathinner{\left.\left|{#1}\vphantom{#2}\right\rangle\kern-0.3\ht\@strut\right\rangle%
        \left\langle\kern-0.3\ht\@strut\left\langle{#2}\vphantom{#1}\right|\right.}%
}
\title{An alternative approach to optimal wire cutting \\ without ancilla qubits}
\author{Edwin Pednault}
\affil{IBM T.J.~Watson Research Center, Yorktown Heights, NY}
\date{}
\begin{document}


\maketitle


\begin{abstract}
  Wire cutting is a technique for partitioning large quantum circuits
  into smaller subcircuits in such a way that observables for the
  original circuits can be estimated from measurements on the smaller
  subcircuits.  Such techniques provide workarounds for the limited
  numbers of qubits that are available on near-term quantum devices.
  Wire cutting, however, introduces multiplicative factors in the
  number of times such subcircuits need to be executed in order to
  estimate desired quantum observables to desired levels of
  statistical accuracy.  An optimal wire-cutting methodology has
  recently been reported that uses ancilla qubits to minimize the
  multiplicative factors involved as a function of the number of wire
  cuts.  Until just recently, the best-known wire-cutting technique
  that did not employ ancillas asymptotically converged to the same
  multiplicative factors, but performed significantly worse for small
  numbers of cuts.  This latter technique also requires inserting
  measurement and state-preparation subcircuits that are randomly
  sampled from Clifford 2-designs on a per-shot basis.  This paper
  presents a modified wire-cutting approach for pairs of subcircuits
  that achieves the same optimal multiplicative factors as wire
  cutting aided by ancilla qubits, but without requiring ancillas.
  The paper also shows that, while unitary 2-designs provide a
  sufficient basis for satisfying the decomposition, 2-designs are not
  mathematically necessary and alternative unitary designs can be
  constructed for the decompositions that are substantially smaller in
  size than 2-designs.  As this paper was just about to be released, a
  similar result was published, so we also include a comparison of the
  two approaches.

\end{abstract}

\section{Introduction}
Current quantum computing devices implement limited numbers of qubits.
If a quantum circuit requires more qubits than is available on a given
device, the circuit cannot be executed directly.  Various technical
approaches have been developed to address this situation, one of which
is known as wire cutting.

Wire cutting partitions a large quantum circuit into smaller
subcircuits that can then execute on small quantum devices.
Partitioning is achieved by ``cutting'' some of the connections
between quantum gates, and then replacing those connections with
measurement and state-preparation operations.  By cutting a sufficient
number of such connections, a set of disconnected subcircuits can be
created, whereby each subcircuit can then be executed on a desired
target device that implements a limited number of qubits.  After
execution, the measurements made for each subcircuit can be
post-processed using classical computation to estimate a desired
quantum observable for the overall circuit, almost as though the
entire circuit was executed on a single, sufficiently large, quantum
device without wire cutting.

However, wire cutting incurs a cost.  Each cut introduces a
multiplicative factor in the number of times the subcircuits need to
be executed in order to estimate desired quantum observables to
desired levels of statistical accuracy.  Quantum observables are
calculated classically by applying post-processing functions that map
actual measurement outcomes (i.e., binary 1's and 0's) to real numbers
and then averaging the resulting real numbers across multiple
executions.  The averaging serves to estimate expected values, and the
statistical estimation errors in those estimates increase with each
wire cut by multiplicative factors.  To compensate, the number of
circuit executions must then be increased by corresponding
multiplicative factors in order to achieve the same levels of
statistical accuracy as would be obtain without wire cutting.  A
central issue in wire-cutting research is to therefore develop
measurement and state-preparation protocols that minimize the
multiplicative factors incurred.

Wire cutting was originally proposed by Peng {\it et
  al.}~\cite{Peng2020wirecutting}.  The basic idea is to decompose the
identity channel into a linear combination of measurement and state
preparation operations.  Wire cutting is performed by inserting an
identity gate at the point where a cut is to be made, and then
replacing the identity gate with its linear decomposition.  The
decomposition of the identity channel for a single qubit may be
expressed mathematically as
\begin{equation}
  \label{eq:Peng}
  \text{id}(\rho) = \rho = \sum_{j=1}^T c_j \rho^j \tr\left(O^j \rho\right)
  \; ,
\end{equation}
where each $O^j$ is a Hermitian observable, $\rho$ and $\rho^j$ are
density matrices, and each $c_j$ is a real-valued coefficient.  The
observables $O^j$ correspond to measurements, while the density
matrices $\rho^j$ correspond to state preparations.  The coefficients
$c_j$ are multiplying weights that are used when calculating the net
overall observable for the original circuit from the measurements made
on the partitioned subcircuits.

\begin{figure}[tb]
  \centering
  \includegraphics[width=0.54\textwidth]{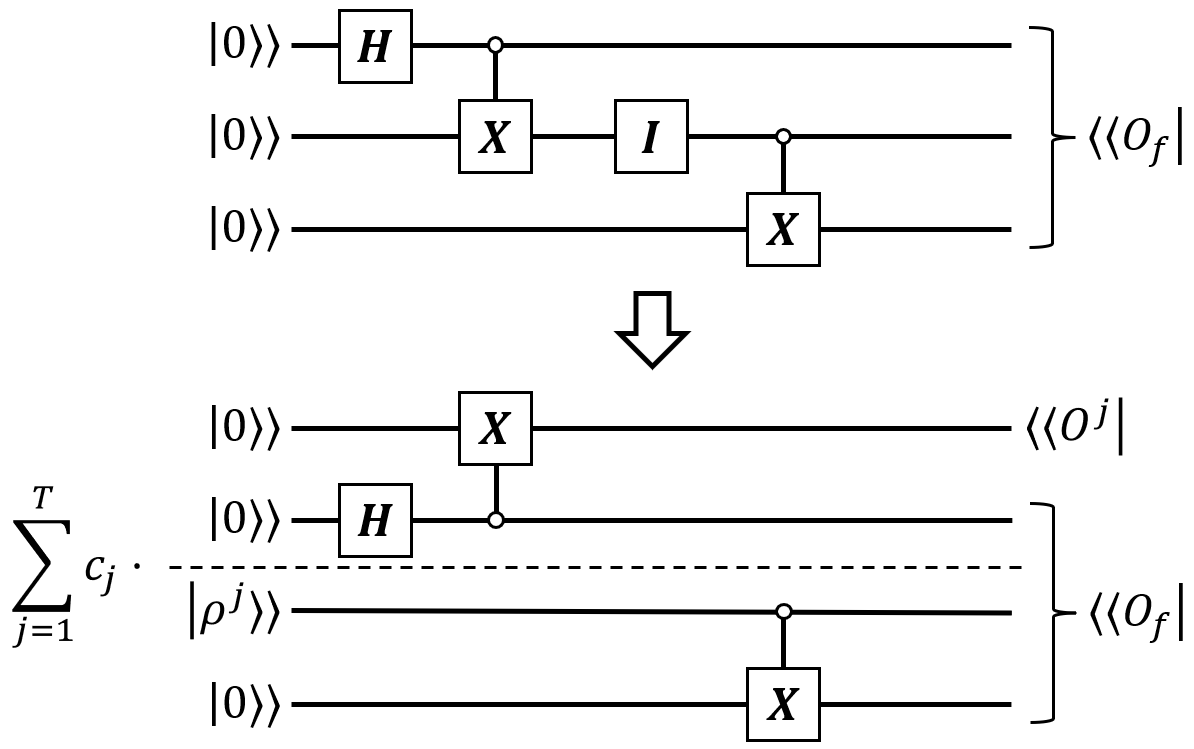}
  \caption{An example of performing wire cutting on a circuit where
    the objective is to estimate an observable $O_f$.  An identity
    gate is inserted at the point where a wire cut is to be made.  The
    gate is then replaced with its linear decomposition, which in this
    example results in two disconnected subcircuits and the
    introduction of an additional observable $O^j$ and state
    preparation $\rho^j$.  The resulting subcircuits can be executed
    separately, with their measurements combined during
    post-processing to estimate the resulting overall observable $O^j
    \otimes O_f$ for the decomposed circuit.}
  \label{fig:wire_cutting_example}
\end{figure}

Fig.~\ref{fig:wire_cutting_example} illustrates the wire cutting
process for a simple circuit where the objective is to estimate an
observable $O_f$.  In this example, wire cutting creates two
disconnected subcircuits and introduces an additional observable $O^j$
and state preparation $\rho^j$ per Eq.~\ref{eq:Peng}.  If
$\mathcal{U}$ is the channel implemented by the original circuit and
if $\mathcal{U}_A$ and $\mathcal{U}_B$ are the channels implemented by
the top and bottom disconnected subcircuits, respectively, then the
expected value of $O_f$ can be expressed in terms of the expected
values of the observables in the disconnected subcircuits as
\begin{equation}
  \langle O_f \rangle
  = \tr\left(O_f \mathcal{U}(\rho_0^{\otimes 3})\right)
  = \sum_{j=1}^T c_j \tr\left((O^j \otimes O_f)
  \left(\mathcal{U}_A(\rho_0^{\otimes 2})
  \otimes \mathcal{U}_B(\rho^j \otimes \rho_0) \right)
  \right)
  \; ,
\end{equation}
where $\rho_0 = \ketbra{0}{0}$.

A notable feature of the decomposition of Peng {\it et al.} is that
the state preparations $\rho^j$ do not depend on the measurement
outcomes of the observables $O^j$.  The decomposition therefore does
not require classical communication between subcircuits.  The same is
true of immediately subsequent
work~\cite{Ayral2021divideandcompute,Tang2021cutqc,Perlin2021maxlikelihood}
in which variations on the same decomposition were proposed.  A recent
result by Brenner {\it et al.}~\cite{Brenner2023optimalwirecutting}
shows that, for any decomposition without communication, estimation
errors must necessarily grow by multiplicative factors of at least
$(4^n)^2$ for $n$ wire cuts, where $(4^n)^2$ is the optimum
multiplicative factor achievable across all possible wire-cutting
protocols without communication for $n$ wire cuts.  The original
decomposition of Peng {\it et al.}\ and its subsequent
variations~\cite{Ayral2021divideandcompute,Tang2021cutqc,Perlin2021maxlikelihood}
all achieve this optimum.  

To do better, classical communication must be employed.  A more recent
wire-cutting decomposition by Lowe {\it et
  al.}~\cite{Lowe2023fastquantumcircuit} does consider classical
communication between subcircuits and achieves a multiplicative factor
of $(2^{n+1}+1)^2$.  Their approach combines two key insights.
\begin{itemize}
    \item State preparations on one side of a wire cut can be made
      contingent on the measurement outcomes obtained on the other
      side of the cut, which requires communication and can lead to
      more efficient decompositions.

    \item When cutting several wires that connect two subcircuits, it
      is more efficient to decompose a single identity gate for the
      entire group of wires than it is to decompose separate identity
      gates, one for each individual wire being cut.
\end{itemize}
Thus, parallel cuts across $n$ qubits are treated as a single cut
across a $d$-dimensional complex space, $d = 2^n$.  Their
decomposition employs measure-and-prepare channels of the following
form expressed in terms of unitaries $U \in \mathsf{U}(\mathbb{C}^d)$
in dimension $d$
\begin{equation}
  \label{eq:LoweMeasurePrep}
  \Psi_U(\rho) = \sum_{k=1}^d \braket{k|U^\dag \rho U |k} U \ketbra{k}{k} U^\dag
  \; .
\end{equation}
Given a probability distribution $\{(p_j,U_j)\}$ over unitaries that
forms a 2-design~\cite{Dankert2009twodesigns}, they then define a pair
of channels in dimension $d$
\begin{equation}
  \label{eq:LoweChannelMeasurePrep}
  \Psi_0(\rho) = E_U[\Psi_U]
  = \sum_j p_j \sum_{k=1}^d \braket{k|U_j^\dag \rho U_j |k} U_j \ketbra{k}{k} U_j^\dag
\end{equation}
and
\begin{equation}
  \label{eq:LoweChannelDepolarized}
  \Psi_1(\rho) = {1 \over d} \tr(\rho) \mathbbm{1}
  \; ,
\end{equation}
and they show that, if $Z$ is a Bernoulli random variable, $Z \in
\{0,1\}$ with probability of outcome $1$ being $d/(2d+1)$, then
\begin{equation}
  \label{eq:LoweIdentity}
  \text{id}(\rho) = (2d + 1) E_Z\left[ (-1)^Z \Psi_Z(\rho) \right]
  \; .
\end{equation}
When cutting $n$ qubits, $d = 2^n$ and the unitaries $U_j$ define
paired measurement and state-preparation operations on the $n$ qubits
being cut.  The multiplicative factor affecting statistical estimation
error is then $(2d + 1)^2 = (2^{n+1} + 1)^2$.

Lowe {\it et al.}\ show that Eq.~\ref{eq:LoweIdentity} holds for any
probability distribution $\{(p_j,U_j)\}$ over unitaries that forms a
2-design.  Such designs exist for all dimensions
$d$~\cite{Dankert2009twodesigns}.  In particular, the uniform
distribution over Clifford circuits forms a unitary 2-design, and it
also forms a 3-design when the dimension $d$ is a power of
two~\cite{Webb2021Clifford}.  Lowe {\it et al.}\ recommend using
uniformly random Clifford circuits as the 2-designs in their
decomposition given that efficient methods exist for generating
pseudorandom Clifford circuits~\cite{VanDenBerg2021Cliffords}.
Implementing this approach, however, would require generating and
applying a separate Clifford circuit for each individual shot, which
can be problematic for existing control circuitry.

This paper presents two improvements to the decomposition of Lowe {\it
  et al}.  The first improvement is an alternative decomposition of
the identity channel of the form
\begin{equation}
  \label{eq:OurIdentity}
  \text{id}(\rho) = (2d - 1) E_U\left[ \Phi_U(\rho) \right]
  \; .
\end{equation}
The corresponding multiplicative factor is $(2d - 1)^2 = (2^{n+1} -
1)^2$, which matches the optimal multiplicative factor achieved by
Brenner {\it et al.}~\cite{Brenner2023optimalwirecutting} using
wire-cutting decompositions that leverage ancilla qubits, but in
our case without using ancillas.  It should be noted,
however, that this improvement only pertains to wire cuts between
individual pairs of subcircuits (i.e., ``parallel'' wire cuts),
whereas the approach of Brenner {\it et al.}\ achieves this
multiplicative factor across all wire cuts between a subcircuit and
all of its neighbors in the resulting communication graph (i.e.,
``arbitrary'' wire cuts).  Nevertheless, the improvement presented
here still offers a significant reduction in multiplicative factors
versus Lowe {\it et al.}, particularly for small numbers of wire cuts
$n$.

The second improvement is that we show that the probability
distribution $\{(p_j,U_j)\}$ over unitaries does not have to be a
2-design, and that it is in fact possible to construct alternative
unitary designs $\{(p_j,U_j)\}$ that satisfy the decomposition that
contain substantially fewer unitaries.  In particular, we present an
algorithm for constructing such unitary designs that, in experiments
conducted thus far, has generated surprisingly compact designs.

This result is interesting from a theoretical perspective in that it
establishes, by way of example, that unitary 2-designs are sufficient
but not mathematically necessary in order to satisfy
Eqs.~\ref{eq:LoweIdentity} and~\ref{eq:OurIdentity}.  It is also
interesting from a practical perspective in that the result raises the
possibility of compiling compact libraries of unitary designs that can
be used when applying Eqs.~\ref{eq:LoweIdentity}
and~\ref{eq:OurIdentity} in practice, thereby avoiding the need to
generate uniformly random Clifford circuits on a per-run, per-shot basis.

As this paper was just about to be released, a very similar result was
published by Harada {\it et al.}~\cite{Harada2023optimalwirecutting}.
They employ a decomposition of the identity channel that can be
written in quasiprobability form as
\begin{equation}
  \label{eq:HaradaIdentity}
  \text{id}(\rho)
  = (2d-1) \left( \sum_{j=1}^d {1 \over 2d - 1}
          \sum_{k=1}^d \braket{k|U_j^\dag \rho U_j |k} U_j \ketbra{k}{k} U_j^\dag
  - {d-1 \over 2d-1} \sum_{k=1}^d \braket{k| \rho |k} \rho_k \right)
  \; ,
\end{equation}
where
\begin{equation}
  \label{eq:HaradaMixedState}
  \rho_k = \sum_{\ell=1}^d {1 \over d-1}(1 - \delta_{k,\ell}) \ketbra{\ell}{\ell}
  \; , \qquad
  \delta_{i,j} = \begin{cases}
    1 & i = j \\
    0 & i \not= j
    \end{cases}
  \; ,
\end{equation}
and where $\{U_j\}_{j=1}^d \cup \{I^{\otimes n}\}$ denotes a set of
unitary operators in dimension $d$ that transform the computational
bases into $d+1$ mutually unbiased bases~\cite{Gokhale2020mub}.  The
corresponding multiplication factor is thus the same as that achieved
here of $(2d - 1)^2 = (2^{n+1} - 1)^2$ for $n$ qubits.

If we define $U_0 = I^{\otimes n}$, Eq.~\ref{eq:HaradaIdentity} can be
rewritten as
\begin{align}
  \label{eq:HaradaLowe}
  \text{id}(\rho)
  &= (d + 1) \sum_{j=0}^d {1 \over d + 1} \sum_{k=1}^d \braket{k|U_j^\dag \rho U_j |k} U_j \ketbra{k}{k} U_j^\dag
  - d \left( {1 \over d} \tr(\rho) \mathbbm{1} \right) \notag \\
  &= (2d + 1) E_Z\left[ (-1)^Z \Psi_Z(\rho) \right]
  \; .
\end{align}
Thus, the decomposition of Harada {\it et al.}\ can likewise be seen
as an improvement to the decomposition of Lowe {\it et al}, where the
corresponding unitary design is $\{(p, U_j)\}_{j=0}^d$, and where $p =
1/(d+1)$, $U_0 = I^{\otimes n}$, and the unitary operators in the
design transform the computational bases into $d+1$ mutually unbiased
bases.  Per Eq.~\ref{eq:HaradaLowe}, this unitary design likewise
satisfies both the original decomposition of Lowe {\it et al.}\ and
the decomposition presented here.  Further relationships among these
decompositions are discussed at the end of this paper.

\section{Improvements to the Decomposition}
An improved quasiprobability decomposition can be obtained by
algebraically combining the depolarizing channel $\Psi_1$ in the
decomposition of Lowe {\it et al.}\ with the measure-and-prepare
channel $\Psi_0$ by replacing the pure-state preparations in $\Psi_0$
with mixed-state preparations, and by introducing a sign (i.e., $\pm
1$) term to account for the negative weighting assigned to the
depolarizing channel $\Psi_1$ in Eq.~\ref{eq:LoweIdentity}.  To define
this mixed-state preparation, we define $q_{\ell|k}$ to be the
conditional probability of preparing state $\ket{\ell}$ after
measuring outcome $\ket{k}$, where
\begin{equation}
  \label{eq:conditionaldist}
  q_{\ell|k} =
  \begin{cases}
    {d \over 2d-1}&\ell = k \\
    {1 \over 2d-1}&\ell \not= k
  \end{cases}
  \; .
\end{equation}
In addition, we define the sign multiplier $s_{\ell|k}$ to be
\begin{equation}
  \label{eq:conditionalsign}
  s_{\ell|k} =
  \begin{cases}
    1&\ell = k \\
    -1&\ell \not= k
  \end{cases}
  \; .
\end{equation}
We then define the signed mixed-state measure-and-prepare operation
$\Phi_U$ to be
\begin{equation}
  \label{eq:PhiUdefinition}
  \Phi_U(\rho) = \sum_{k=1}^d  \braket{k|U^\dag \rho U |k}
          \sum_{\ell=1}^d q_{\ell|k}s_{\ell|k} U \ketbra{\ell}{\ell} U^\dag
  \; .
\end{equation}
If it were not for the sign term $s_{\ell|k}$, $\Phi_U$ would be
completely positive and trace preserving, but because of 
$s_{\ell|k}$ it is not.  However, we can prove the following.
\begin{lemma}
  \label{lem:equality}
For any probability distribution $\{(p_j,U_j)\}$ over unitaries (not
necessarily a 2-design), and for Bernoulli random variable $Z \in
\{0,1\}$ with probability of outcome $1$ being $d/(2d+1)$, it is the
case that
\begin{equation}
  (2d - 1) E_U\left[ \Phi_U(\rho) \right]
  = (2d + 1) E_Z\left[ (-1)^Z \Psi_Z(\rho) \right]
  \; .
\end{equation}
\end{lemma}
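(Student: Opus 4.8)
The plan is to reduce both sides of the claimed identity to the same closed form that no longer references the random objects $U$ or $Z$, and then observe that the two reductions coincide. Because the outer expectation $E_U$ on the left is linear, I would first work at the level of a single fixed unitary $U$ and show that the signed mixed-state operation $\Phi_U$ of Eq.~\ref{eq:PhiUdefinition} is merely an affine combination of the simpler channel $\Psi_U$ from Eq.~\ref{eq:LoweMeasurePrep} and the trace functional.

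To carry this out I would substitute the explicit values of $q_{\ell|k}s_{\ell|k}$ from Eqs.~\ref{eq:conditionaldist} and~\ref{eq:conditionalsign} into the inner sum over $\ell$, separating the diagonal term $\ell=k$ (weight $d/(2d-1)$) from the off-diagonal terms $\ell\neq k$ (each weight $-1/(2d-1)$). The key algebraic move is to rewrite $\sum_{\ell\neq k}$ as $\sum_{\ell=1}^d$ minus the $\ell=k$ term, so that the full sum collapses by unitarity, $\sum_{\ell=1}^d U\ketbra{\ell}{\ell}U^\dag = U\mathbbm{1}U^\dag = \mathbbm{1}$. The inner sum then simplifies to $\frac{d+1}{2d-1}U\ketbra{k}{k}U^\dag - \frac{1}{2d-1}\mathbbm{1}$, and summing this against the weights $\braket{k|U^\dag \rho U|k}$ — using $\sum_k \braket{k|U^\dag \rho U|k} = \tr(\rho)$ by cyclicity of the trace — yields the compact identity
\[
  \Phi_U(\rho) = \frac{d+1}{2d-1}\,\Psi_U(\rho) - \frac{1}{2d-1}\,\tr(\rho)\,\mathbbm{1} \;.
\]
Applying $E_U$ (which acts only on $\Psi_U$, giving $\Psi_0$ by Eq.~\ref{eq:LoweChannelMeasurePrep}) and multiplying through by $(2d-1)$ then produces $(2d-1)E_U[\Phi_U(\rho)] = (d+1)\Psi_0(\rho) - \tr(\rho)\,\mathbbm{1}$.

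For the right-hand side I would simply expand the Bernoulli expectation. With $P(Z=1)=d/(2d+1)$ and $P(Z=0)=(d+1)/(2d+1)$, one gets $(2d+1)E_Z[(-1)^Z\Psi_Z(\rho)] = (d+1)\Psi_0(\rho) - d\,\Psi_1(\rho)$, and substituting $\Psi_1(\rho) = \frac{1}{d}\tr(\rho)\mathbbm{1}$ from Eq.~\ref{eq:LoweChannelDepolarized} gives exactly $(d+1)\Psi_0(\rho) - \tr(\rho)\,\mathbbm{1}$. This matches the reduced left-hand side term for term, which finishes the argument.

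I do not expect a genuine obstacle: the statement is at heart a bookkeeping identity, and its only real content is that the conditional probabilities $q_{\ell|k}$ and signs $s_{\ell|k}$ are engineered precisely so that the diagonal and off-diagonal contributions recombine into $(d+1)\Psi_U$ minus a trace term normalized by $1/(2d-1)$. The one place to stay careful is the coefficient tracking when folding the $\ell=k$ term back out of the collapsed identity sum: the diagonal piece receives both the explicit weight $d/(2d-1)$ and an extra $+1/(2d-1)$ from removing that term from the full sum, and getting the combined coefficient $\frac{d+1}{2d-1}$ exactly right is what makes the two sides agree for every $d$.
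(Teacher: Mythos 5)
Your proposal is correct and follows essentially the same argument as the paper: both expand the definition of $\Phi_U$, use the fact that the signed weights combine the $\ell=k$ and $\ell\neq k$ terms into $(d+1)\,U\ketbra{k}{k}U^\dag$ minus the completeness sum $\mathbbm{1}$, and then match $(d+1)\Psi_0 - d\Psi_1$ against the Bernoulli expansion of the right-hand side. The only cosmetic difference is that you first isolate the per-unitary identity $\Phi_U = \tfrac{d+1}{2d-1}\Psi_U - \tfrac{1}{2d-1}\tr(\rho)\mathbbm{1}$ before averaging, whereas the paper carries the sum $\sum_j p_j$ through the whole computation.
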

\begin{proof}
  From the above definitions we have
\begin{align}
  (2d - 1) E_U\left[ \Phi_U(\rho) \right]
  &= \sum_j p_j \sum_{k=1}^d  \braket{k|U_j^\dag \rho U_j |k}
          \sum_{\ell=1}^d (2d-1)q_{\ell|k}s_{\ell|k} U_j \ketbra{\ell}{\ell} U_j^\dag \notag \\
  &= \sum_j p_j \sum_{k=1}^d  \braket{k|U_j^\dag \rho U_j |k}
          \left(
            (d+1) U_j \ketbra{k}{k} U_j^\dag 
            - \sum_{\ell=1}^d U_j \ketbra{\ell}{\ell} U_j^\dag
          \right) \notag \\
          &= (d+1) \sum_j p_j \Psi_{U_j}(\rho) - d  \sum_j p_j {1 \over d} \tr(\rho) \mathbbm{1} \notag \\
          &= (2d + 1) \left( {d+1 \over 2d + 1}\Psi_0 - {d \over 2d + 1}\Psi_1 \right) \notag \\
          &= (2d + 1) E_Z\left[ (-1)^Z \Psi_Z(\rho) \right]
  \; .
\end{align}
\end{proof}

\begin{lemma}
  \label{lem:identity}
  For any 2-design $\{(p_j,U_j)\}$
  \begin{equation}
    \text{\rm id}(\rho) = (2d - 1) E_U\left[ \Phi_U(\rho) \right]
  \end{equation}
\end{lemma}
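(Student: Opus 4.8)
The plan is to obtain the claim as an immediate corollary of the preceding Lemma~\ref{lem:equality} together with the known decomposition of Lowe \emph{et al}. Lemma~\ref{lem:equality} was established for an \emph{arbitrary} probability distribution $\{(p_j,U_j)\}$ and states that $(2d-1)E_U[\Phi_U(\rho)] = (2d+1)E_Z[(-1)^Z\Psi_Z(\rho)]$. Restricting now to a 2-design, Eq.~\ref{eq:LoweIdentity} guarantees that the right-hand side is exactly $\text{id}(\rho)$. Transitivity then yields $(2d-1)E_U[\Phi_U(\rho)] = \text{id}(\rho)$, which is the assertion. On this route the entire computational burden has already been discharged inside Lemma~\ref{lem:equality}, so there is no residual obstacle; the only place where the 2-design hypothesis enters is through the cited identity Eq.~\ref{eq:LoweIdentity}.

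It is worth sketching a self-contained alternative that avoids invoking Eq.~\ref{eq:LoweIdentity} directly, since it isolates precisely where the 2-design property is needed. First I would record the scalar identity $(2d-1)q_{\ell|k}s_{\ell|k} = (d+1)\delta_{k,\ell} - 1$ that already underlies the proof of Lemma~\ref{lem:equality}, and use it to rewrite $\Phi_U$ before averaging. Using $\sum_{\ell} U\ketbra{\ell}{\ell}U^\dag = \mathbbm{1}$ and $\sum_k \braket{k|U^\dag \rho U |k} = \tr(\rho)$, this collapses to the pointwise identity
\begin{equation}
  (2d-1)\Phi_U(\rho) = (d+1)\Psi_U(\rho) - \tr(\rho)\mathbbm{1}
  \; ,
\end{equation}
which holds for every individual unitary $U$ with no appeal to any design condition. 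Taking the expectation over $\{(p_j,U_j)\}$ and recalling $E_U[\Psi_U] = \Psi_0$ from Eq.~\ref{eq:LoweChannelMeasurePrep}, the problem reduces to evaluating the single twirled channel $\Psi_0$.

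I expect this twirl to be the only genuinely nontrivial step. Since each summand $\braket{k|U^\dag \rho U |k}\,U\ketbra{k}{k}U^\dag = U\ketbra{k}{k}U^\dag\,\rho\,U\ketbra{k}{k}U^\dag$ is second order in $U$ and $U^\dag$, the average $\Psi_0(\rho)$ depends only on the second moment of the distribution; this is exactly the quantity the 2-design hypothesis pins down, allowing the design average to be replaced by the Haar average. Carrying out that Haar twirl --- for instance via the standard Weingarten/Schur--Weyl evaluation of $\int dU\, U\ketbra{k}{k}U^\dag \otimes U\ketbra{k}{k}U^\dag$ in terms of the identity and swap operators --- should give
\begin{equation}
  (d+1)\Psi_0(\rho) = \rho + \tr(\rho)\mathbbm{1}
  \; .
\end{equation}
Substituting this into the pointwise identity above cancels the $\tr(\rho)\mathbbm{1}$ terms and leaves $(2d-1)E_U[\Phi_U(\rho)] = \rho = \text{id}(\rho)$, completing the argument. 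As consistency checks on the twirl formula I would verify that both sides have trace $(d+1)\tr(\rho)$ and that they agree on the maximally mixed state $\rho = \mathbbm{1}/d$, for which $\Psi_0(\mathbbm{1}/d) = \mathbbm{1}/d$.
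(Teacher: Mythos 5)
Your primary argument is exactly the paper's own proof: the paper establishes Lemma~\ref{lem:equality} for arbitrary distributions and then cites Lemma~2.1 of Lowe \emph{et al.}\ (i.e., Eq.~\ref{eq:LoweIdentity}) to conclude by transitivity, which is precisely your first paragraph. Your supplementary self-contained sketch is also sound --- the pointwise identity $(2d-1)\Phi_U(\rho) = (d+1)\Psi_U(\rho) - \tr(\rho)\mathbbm{1}$ and the 2-design twirl $(d+1)\Psi_0(\rho) = \rho + \tr(\rho)\mathbbm{1}$ are both correct --- but it amounts to unpacking the content of the cited lemma rather than a different route.
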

\begin{proof}
  Follows immediately from the above lemma and Lemma~2.1
  in~\cite{Lowe2023fastquantumcircuit}.
\end{proof}

\begin{algorithm}
    \caption{Single Monte Carlo Trial}
    \label{alg:montecarlo}
    \hspace*{\algorithmicindent} \textbf{Input:}
    $f:\{0,1\}^n \rightarrow [-1,1]$,
    quantum circuit $C$ on $n$ qubits, unitary design $\{(p_i,U_i)\}$ \\
    \hspace*{\algorithmicindent} \textbf{Output:}
        Random Variable $Y$ s.t. $E[Y] = \tr(O_f\mathcal{N}(\rho_0^{\otimes n}))$
    \begin{algorithmic}[1] 
      \State Initialize wires $A$ to $\ket{0}_A$
      \State Apply circuit $C_A$ to wires $A$
      \State $V \gets$ \,unitary $U_i$ on $k$ qubits randomly drawn according to $\{(p_i,U_i)\}$
      \State Apply circuit $V^\dag$ to wires $A \cap B$
      \State $y \gets$ \,measurement of wires $A \cap B$
      \State $x_A \gets$ \,measurement of wires $[n] \setminus B$
      \State $z \gets \ell$ randomly drawn according to the conditional
             distribution $q_{\ell | y}$ defined in Eq.~\ref{eq:conditionaldist}
      \State Initialize wires $B$ to $\ket{z}_{A \cap B} \ket{0}_{[n] \setminus A}$
      \State Apply circuit $V$ to wires $A \cap B$
      \State Apply circuit $C_B$ to wires $B$
      \State $x_B \gets$ \,measurement of wires $B$
      \State $x \gets x_Ax_B$
      \State $Y \gets (2^{k+1} - 1) \, s_{z|y} \, f(x)$, where $s_{z|y}$
      is defined in Eq.~\ref{eq:conditionalsign}
      \State \textbf{return} $Y$
    \end{algorithmic}
\end{algorithm}

In practice, the identity decomposition in Lemma~\ref{lem:identity}
can be implemented as a Monte Carlo simulation that combines classical
computation with quantum execution.  An algorithm for implementing a
single trial of this simulation is specified in pseudocode in
Algorithm~\ref{alg:montecarlo}.  This algorithm would then be executed
multiple times and the resulting outputs $Y$ would be averaged over
the resulting trials to estimate the expected value of a desired
observable.  Algorithm~\ref{alg:montecarlo} borrows notation and
terminology from Lowe {\it et al.}\ for ease of comparison.  Thus, we
consider an $n$-qubit circuit $C$ composed of subcircuits $C_A,C_B$
acting on sets of qubits $A,B \subseteq [n]$, respectively, where $|A
\cup B| = n$ and $|A \cap B| = k$.  We also consider a diagonal
observable $O_f$ with eigenvalues defined by a function $f:\{0,1\}^n
\rightarrow [-1,1]$ such that $O_f = \sum_{x\in\{0,1\}^n} f(x)
\ketbra{x}{x}$.  The overall action of the quantum circuit $C$ is to
implement a quantum channel denoted by $\mathcal{N}$, and the goal is
therefore to estimate $\tr(O_f \mathcal{N}(\rho_0^{\otimes n}))$,
$\rho_0 = \ketbra{0}{0}$.  The pseudocode notation in each line in
Algorithm~\ref{alg:montecarlo} likewise follows Lowe {\it et al}.

Comparing Algorithm~\ref{alg:montecarlo} to Algorithm 1 of Lowe {\it
  et al.}~\cite{Lowe2023fastquantumcircuit}\footnote{It should be
  noted that, when specifying their Algorithm~1, Lowe {\it et
    al.}\ failed to include the equivalent of Step 9 in the above
  algorithm of applying circuit $V$ to wires $A \cap B$ after
  initializing the qubits for circuit $C_B$.  This step should have
  been inserted between their Steps 12 and 13 as required by the
  definition of their measure-and-prepare channel $\Psi_0$.  Strictly
  speaking, in their case, this step would also have to be applied
  only in the case when $z=0$, since $z$ is technically being used in
  their algorithm to choose between applying $\Psi_0$ versus $\Psi_1$.
  Pragmatically speaking, however, the conditioning on $z$ could be
  omitted because applying a unitary to a depolarized state yields a
  depolarized state in the case when $z=1$, so the conditioning really
  wouldn't matter.}, we can see that in both cases, a random variable
$z$ is introduced that controls a stochastic initialization of wires
$B$ prior to executing subcircuit $C_B$.  The only significant
procedural difference between the two algorithms is that this
stochastic initialization is conditioned on the measurement outcome of
subcircuit $C_A$ in the case of Algorithm~\ref{alg:montecarlo} above,
whereas it is independent of the measurement outcome in the case of
Algorithm 1 in~\cite{Lowe2023fastquantumcircuit}.  The mathematical
consequences of this procedural difference are that, in the case of
Algorithm~\ref{alg:montecarlo} above, the output $Y$ is scaled by a
factor $2^{k+1}-1$ and the sign change $s_{z|y}$ is conditioned on the
measurement outcome of subcircuit $C_A$, whereas for Algorithm 1
in~\cite{Lowe2023fastquantumcircuit} the output $Y$ is scaled by a
factor $2^{k+1}+1$ and the sign change $(-1)^z$ is independent of the
measurement outcome of subcircuit $C_A$.

The scaling factors $2^{k+1}-1$ and $2^{k+1}+1$ correspond to the
$\gamma$'s in the quasiprobability
representations~\cite{Temme2017errormitigation} of these
decompositions of the identity channel.  The importance of $\gamma$
for both algorithms is that the probability of the estimation error
of the observable exceeding $\epsilon$ after averaging over $N$
trials can be bounded by the Hoeffding inequality using $\gamma$:
\begin{equation}
  Pr\left\{\left|\tr(O_f\mathcal{N}(\rho_0^{\otimes n}))
     - {1 \over N} \sum_{i=1}^N Y_i \right| \ge \epsilon \right\}
  \le 2 e^{-{N \epsilon^2 \over 2\gamma^2}}
  \; .
\end{equation}
Although the difference between $\gamma = 2^{k+1}-1$ for
Algorithm~\ref{alg:montecarlo} above and $\gamma = 2^{k+1}+1$ for
Algorithm~1 in~\cite{Lowe2023fastquantumcircuit} becomes
asymptotically negligible as the number of wire cuts $k$ gets large,
for one wire cut it means a $\gamma$ of $3$ instead of $5$ and, hence,
a multiplicative factor in the Hoeffding exponent of $9$ versus $25$.
For two cuts it means a $\gamma$ of $7$ versus $9$ and a
multiplicative factor of $49$ versus $81$.  So, for small numbers of
wire cuts, the statistical differences between the two are quite
significant, with the decomposition presented here offering a
substantial improvement over the decomposition presented
in~\cite{Lowe2023fastquantumcircuit}.

\begin{algorithm}
    \caption{Alternative Single Monte Carlo Trial}
    \label{alg:altmontecarlo}
    \hspace*{\algorithmicindent} \textbf{Input:}
    $f:\{0,1\}^n \rightarrow [-1,1]$,
    quantum circuit $C$ on $n$ qubits, unitary design $\{(p_i,U_i)\}$ \\
    \hspace*{\algorithmicindent} \textbf{Output:}
        Random Variable $Y$ s.t. $E[Y] = \tr(O_f\mathcal{N}(\rho_0^{\otimes n}))$
    \begin{algorithmic}[1] 
      \State Initialize wires $A$ to $\ket{0}_A$
      \State Apply circuit $C_A$ to wires $A$
      \State $V \gets$ \,unitary $U_i$ on $k$ qubits randomly drawn according to $\{(p_i,U_i)\}$
      \State Apply circuit $V^\dag$ to wires $A \cap B$
      \State $y \gets$ \,measurement of wires $A \cap B$
      \State $x_A \gets$ \,measurement of wires $[n] \setminus B$
      \State $w \gets 1$ with probabilty $2^k/(2^{k+1}-1)$, $0$ otherwise
      \If{$w=0$}
          \State Initialize wires $B$ to $\ket{y}_{A \cap B} \ket{0}_{[n] \setminus A}$
          \State $z \gets y$
      \Else
          \State Initialize wires $B$ to $\ket{0}_{A \cap B} \ket{0}_{[n] \setminus A}$
          \State Apply Hadamard gates to wires $A \cap B$
          \State $z \gets$ \,measurement of wires $A \cap B$
      \EndIf
      \State Apply circuit $V$ to wires $A \cap B$
      \State Apply circuit $C_B$ to wires $B$
      \State $x_B \gets$ \,measurement of wires $B$
      \State $x \gets x_Ax_B$
      \State $Y \gets (2^{k+1} - 1) \, s_{z|y} \, f(x)$, where $s_{z|y}$
      is defined in Eq.~\ref{eq:conditionalsign}
      \State \textbf{return} $Y$
    \end{algorithmic}
\end{algorithm}

It is worth noting that the random initialization defined in Steps~7
and~8 of Algorithm\ref{alg:montecarlo} can also be implemented in a
slightly different manner by leveraging the fact that the conditional
probability distribution $q_{\ell | y}$ defined in
Eq.~\ref{eq:conditionaldist} can be expressed as a mixture of a
deterministic distribution and a uniformly random distribution
\begin{equation}
  \label{eq:conditionaldist}
  q_{\ell|k} = \left({d-1 \over 2d-1}\right) q^0_{\ell|k} +  \left({d \over 2d-1}\right) q^1_{\ell|k}
  \; , \qquad
  q^0_{\ell|k} = \begin{cases}
    1&\ell = k \\
    0&\ell \not= k
  \end{cases}
  \; , \qquad
  q^1_{\ell|k} = {1 \over d}
  \; .
\end{equation}
Thus, initialing wires $A \cap B$ in circuit $C_B$ to $\ket{\ell}$
according to the distribution $q_{\ell|k}$ is equivalent to
initializing those wires to $\ket{k}$ with probability $(d-1)/(2d-1)$
and to a uniformly random state (i.e., completely depolarized) with
probability $d/(2d-1)$.  The latter can be accomplished by
initializing wires $A \cap B$ to the ground state $\ket{0}$, applying
Hadamard gates, and then measuring those qubits.

Algorithm~\ref{alg:altmontecarlo} embodies this alternative approach,
which may be easier to implement on existing quantum devices than
Algorithm~\ref{alg:montecarlo}. With this change, $\Phi_U$ in
Eq.~\ref{eq:PhiUdefinition} is effectively being re-expressed as
\begin{align}
  \label{eq:PhiUdefinition}
  \Phi_U(\rho)
  &= {d-1 \over 2d-1}\sum_{k=1}^d  \braket{k|U^\dag \rho U |k}
           U \ketbra{k}{k} U^\dag
    + {d \over 2d-1} \sum_{k=1}^d  \braket{k|U^\dag \rho U |k}
          \sum_{\ell=1}^d s_{\ell|k} {1 \over d} U \ketbra{\ell}{\ell} U^\dag \notag \\
  &= \left({d-1 \over 2d-1}\right)\Psi_U
    + {d \over 2d-1} \sum_{k=1}^d  \braket{k|U^\dag \rho U |k}
          \sum_{\ell=1}^d s_{\ell|k} {1 \over d} U \ketbra{\ell}{\ell} U^\dag
  \; ,
\end{align}
where $\Psi_U$ is the measure-and-prepare channel of Lowe {\it et
  al.}\ defined in Eq.~\ref{eq:LoweMeasurePrep}.

\section{Improvements to the Unitary Design}
Our second result demonstrates that, while unitary 2-designs are
sufficient to satisfy the equational form of the decomposition
proposed by Lowe {\it et al.}, they are not mathematically necessary.
The mapping from unitaries $U$ to measure-and-prepare channels
$\Psi_U$ is, in fact, highly injective (i.e., many to one).  There are
two fundamental reasons:
\begin{itemize}
  \item Measurements remove phase information, so unitaries that
    achieve the same amplitude magnitudes but with different phases
    during measurement will yield the same measurement outcomes.
  
  \item Two unitaries that are identical to within a permutation of
    the quantum state will also yield the same channel because the
    measure-and-prepare step at the heart of the operation effectively
    removes the effect of such permutations (e.g., a sequence of {\it
      CNOT} gates produces the same overall measure-and-prepare
    channel as the identity gate).
\end{itemize}    
Consequently, any 2-design $\{(p_i, U_i)\}$ can be collapsed to
produce a smaller design $\{(q_j,V_j)\}$, not necessarily a 2-design,
that achieves the same decomposition, where
\begin{itemize}
  \item The set of $V_j$'s is a subset of the $U_i$'s in the 2-design,
    with each $V_j$ defining an equivalence class of $U_i$'s that
    yield identical measure-and-prepare channels: $\Psi_{U_i} =
    \Psi_{V_j}$, $\Psi_{V_j} \not= \Psi_{V_k}$ for $j \not= k$.  
  \item Each $q_j$ is the sum of the $p_i$'s for which $U_i$ produces
    the same channel as $V_j$, $q_j = \sum_{i :
      \Psi_{U_i} = \Psi_{V_j}} p_i$.
\end{itemize}

Just as one can perform an algorithmic search for efficient 2-designs
(e.g.,~\cite{Bravyi2022optimalclifford}), it is also possible to
perform an algorithmic search for efficient unitary designs that
satisfy the equational form of Lowe {\it et al.}, at least for small
numbers of cut wires such as one would encounter in practice.  The
basic approach would be to generate sets of circuits that yield
distinct measure-and-prepare channels $\Psi_U$ and to simultaneously
use linear techniques to find decompositions of the form
\begin{equation}
  \label{eq:linearequation}
  \text{id} = a_0\left(- \Psi_1\right)
  + \sum_{j=1}^{\mathcal{N}} a_j \Psi_{U_j}
  \; .
\end{equation}
Such decompositions can then be directly related back to the
decomposition of Lowe {\it et al.}\ as well as its refinement
presented here.

Various linear techniques can be used to determine whether
coefficients $a_i$ exist that satisfy the decomposition, and to also
calculate their values:
\begin{itemize}
  \item Ordinary least-squares linear regression can be used to find
    coefficients $a_i$ that can be either positive or negative.

  \item Non-negative least-squares linear regression can be used to
    find coefficients $a_i$ that are strictly non-negative.

  \item Linear programming can be used to find coefficients $a_i$
    that minimize the sum of their absolute values $\gamma = \sum_i
    |a_i|$, where $\gamma$ then enters into the Hoeffding bound.
\end{itemize}

The squared errors returned by ordinary and non-negative least-squares
can be used to determine when a sufficient number of
measure-and-prepare channels $\Psi_U$ have been generated so that a
solution then exists that satisfies the constraints of those methods.
In that case, once the squared error falls below a desired tolerance
(e.g., $10^{-12}$), the resulting coefficients can then be used as a
solution to the decomposition.  Linear programming, on the other hand,
requires that the decomposition be introduced as a constraint when
optimizing the coefficients.  As such, linear programming can only be
used once it has been determined, using one of the linear regression
methods, that a sufficient number of measure-and-prepare channels have
been generated.

To simplify the task of applying these linear techniques, we use a
linear-algebra approach to represent the channels $\Psi_1$ and
$\Psi_U$ as 4-dimensional tensors that can then be reshaped into
the 1-dimensional column vectors needed by computer implementations of
these linear techniques.  The construction is provided by the
following lemma.

\begin{lemma}
  \label{lem:tensor}
Any linear transformation $T : \mathbb{C}^{m \times n} \rightarrow
\mathbb{C}^{m \times n}$ over complex matrices $\mathbb{C}^{m \times
  n}$ can be equivalently represented as a 4-dimensional tensor $L^T$
such that, for all $A \in \mathbb{C}^{m \times n}$,
\begin{equation}
  \left(T(A)\right)_{i,j} =  \sum_{\ell,m} L^T_{i, j, \ell,m} A_{\ell, m}
  \; .
\end{equation}
\end{lemma}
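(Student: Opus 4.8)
The plan is to exploit the fact that $\mathbb{C}^{m \times n}$ is a finite-dimensional complex vector space of dimension $mn$, so that any linear map on it is completely determined by its action on a basis. The natural basis to use is the set of matrix units $E_{\ell,p}$, defined by $(E_{\ell,p})_{a,b} = \delta_{\ell,a}\,\delta_{p,b}$, each having a single nonzero entry equal to $1$ in position $(\ell,p)$. (I write the column index as $p$ rather than the $m$ that appears in the statement's summation, to avoid colliding with the row dimension $m$.) Every matrix $A$ then admits the entrywise expansion $A = \sum_{\ell,p} A_{\ell,p}\, E_{\ell,p}$, since both sides agree in each coordinate by construction.

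First I would apply linearity of $T$ to this expansion to obtain $T(A) = \sum_{\ell,p} A_{\ell,p}\, T(E_{\ell,p})$. Taking the $(i,j)$ entry of both sides and using that the coordinate functional $M \mapsto M_{i,j}$ is itself linear, this gives $(T(A))_{i,j} = \sum_{\ell,p} A_{\ell,p}\, (T(E_{\ell,p}))_{i,j}$.

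Next I would simply \emph{define} the tensor by reading off these coefficients, setting $L^T_{i,j,\ell,p} := (T(E_{\ell,p}))_{i,j}$; that is, the $(i,j,\ell,p)$ component of $L^T$ is the $(i,j)$ entry of the image under $T$ of the $(\ell,p)$ matrix unit. Substituting this definition into the previous display yields precisely the claimed identity $(T(A))_{i,j} = \sum_{\ell,p} L^T_{i,j,\ell,p}\, A_{\ell,p}$ for every $A \in \mathbb{C}^{m\times n}$, which completes the construction.

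There is essentially no genuine obstacle here: the statement is the standard fact that a linear map between finite-dimensional spaces is represented by an array of scalars once bases are fixed, and the only care required is index bookkeeping — keeping the four indices distinct and not conflating the summation index with the matrix dimension $m$. If it seems worthwhile, I would also note that the representation is \textbf{unique}, since evaluating $T$ on each $E_{\ell,p}$ forces every component of $L^T$; hence $T \mapsto L^T$ is a vector-space isomorphism from the linear transformations on $\mathbb{C}^{m\times n}$ onto the $m^2 n^2$-dimensional space of such tensors, which is exactly what justifies the subsequent reshaping of $L^T$ into a column vector for the linear-algebraic search.
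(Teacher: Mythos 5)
Your proof is correct, and it takes a more elementary route than the paper's. The paper equips $\mathbb{C}^{m\times n}$ with the Hilbert--Schmidt inner product $\langle A,B\rangle = \tr(A^\dag B)$, expands $A = \sum_s \beta_s \langle \beta_s, A\rangle$ over an \emph{arbitrary} orthonormal basis $\{\beta_s\}$, applies $T$, and arrives at $L^T_{i,j,\ell,m} = \sum_{r,s} \langle \beta_r, T(\beta_s)\rangle (\beta_r)_{i,j} (\beta^*_s)_{\ell,m}$, closing with the observation that the resulting tensor is independent of the choice of basis. You instead work directly with the matrix units $E_{\ell,p}$ and define $L^T_{i,j,\ell,p} = (T(E_{\ell,p}))_{i,j}$; this is exactly what the paper's formula reduces to when the $\beta_s$ are taken to be the matrix units (which are orthonormal under the trace inner product), so the two constructions produce the same tensor. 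What your version buys is brevity and the avoidance of any inner-product machinery, plus an explicit uniqueness statement (the paper only asserts basis-independence, which is the same fact viewed from the other side). What the paper's version buys is the general orthonormal-basis formulation, which matters for its intended application: it lets one compute $L^T$ in whatever basis is convenient (e.g., a Pauli or Liouville-space basis) and connects the construction directly to the standard superoperator formalism cited immediately after the lemma. Neither proof has a gap; yours is a clean specialization of the paper's argument to the canonical basis.
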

\begin{proof}
We begin by noting that the set of $m \times n$ complex matrices is a
vector space in and of itself without having to first ``vectorize''
the matrices.  Specifically, for all matrices $A,B \in \mathbb{C}^{m
  \times n}$, scalars $a,b \in \mathbb{C}$, the zero matrix $0 \in
\mathbb{C}^{m \times n}$, and the unity scalar $1 \in \mathbb{C}$:
\begin{equation}
  \begin{tabular}{L L}
    A + B = B + A & a(bA) = (ab)A \\
    A + (B + C) = (A + B) + C & 1A = A \\
    A + 0 = A & a(A + B) = aA + aB \\
    A + (-A) = 0 & (a + b)A = aA + bA
  \end{tabular}
\end{equation}
In the usual manner, we define $\langle A, B \rangle = \sum_{i,j}
\overline{A}_{i,j} B = \sum_{i,j} A^*_{i,j} B = Tr(A^H B) =
Tr(A^{\dag}B)$, where, depending on your preferred notation,
$\overline{A}_{i,j}$ and $A^*_{i,j}$ are the complex conjugates of
$A$, and $A^H$ and $A^{\dag}$ are the conjugate (i.e., Hermitian)
transposes of $A$.  Then $\langle \cdot , \cdot \rangle$ is an inner
product that is linear in the second argument.  Specifically, for all
matrices $A,B,C \in \mathbb{C}^{m \times n}$, scalars $a,b \in
\mathbb{C}$, and the zero scalar $0 \in \mathbb{R}$:
\begin{equation}
  \begin{tabular}{L}
    \langle A, B \rangle = \overline{\langle B, A \rangle}
    = \langle B, A \rangle^* \\
    \langle A, aB + bC \rangle = a\langle A, B \rangle + b\langle A, C \rangle \\
    \langle A, A \rangle \ge 0 \text{ with equality
      if and only if } A \text{ is the zero matrix}
  \end{tabular}
\end{equation}
Consequently, if the set $\{\beta_s\}$ forms an orthonormal basis for
$\mathbb{C}^{m \times n}$, then for any matrix $A \in \mathbb{C}^{m
  \times n}$,
\begin{equation}
   A = \sum_s \beta_s \langle \beta_s, A \rangle
\end{equation}
Similarly, for any linear transformation $T : \mathbb{C}^{m \times n}
\rightarrow \mathbb{C}^{m \times n}$
\begin{equation}
  T(A) = \sum_s T(\beta_s) \langle \beta_s, A \rangle
  = \sum_r \sum_s \langle \beta_r, T(\beta_s) \rangle
    \beta_r \langle \beta_s, A \rangle
\end{equation}
When written using tensor indices, this equation becomes
\begin{align}
  \left(T(A)\right)_{i,j}
  &= \sum_r \sum_s \langle \beta_r, T(\beta_s) \rangle
    \left(\beta_r\right)_{i,j} \langle \beta_s, A \rangle \notag \\
  &= \sum_r \sum_s \langle \beta_r, T(\beta_s) \rangle
    \left(\beta_r\right)_{i,j} \sum_{\ell,m} \left(\beta^*_s\right)_{\ell,m} A_{\ell,m} \notag \\
  &= \sum_{\ell,m} \left( \sum_r \sum_s \langle \beta_r, T(\beta_s) \rangle
    \left(\beta_r\right)_{i,j} \left(\beta^*_s\right)_{\ell,m} \right)  A_{\ell,m} \notag \\
  &= \sum_{\ell,m} L^T_{i, j, \ell,m} \, A_{\ell, m} \; ,
\end{align}
where
\begin{equation}
  L^T_{i, j, \ell,m} =  \sum_r \sum_s \langle \beta_r, T(\beta_s) \rangle
  \left(\beta_r\right)_{i,j} \left(\beta^*_s\right)_{\ell,m}
  \; .
\end{equation}
Thus, all linear transformations $T$ over complex matrices can be
equivalently represented as 4-dimensional tensors $L^T$, which have
similar multiplication rules as matrices, except that the summations
are taken over two indices instead of just one index.  In terms of
tensor operators, $L^T$ may be expressed as
\begin{equation}
  L^T =  \sum_r \sum_s \langle \beta_r, T(\beta_s) \rangle
  \beta_r \otimes_{\text{outer}} \beta^*_s
  \; ,
\end{equation}
where $\otimes_{\text{outer}}$ is the tensor outer product, not to be
confused with the Kronecker product.  It should be noted that the
choice of orthonormal basis $\{\beta_s\}$ is irrelevant in the above
construction: all choices yield the same tensor $L^T$.
\end{proof}

Because the above construction makes no assumptions regarding the
nature of the linear transformations $T$, the construction can be
directly applied in Liouville space, where the resulting tensors $L^T$
are the tensor representations of superoperators and the density
matrices they operate on do not have to be ``vectorized''
(c.f.,~\cite{Gyamfi2020Liouville}).

The benefit of the tensor representation of superoperators is that it
is then quite straight forward to construct the superoperator tensors
for the channels in Eq.~\ref{eq:linearequation}.  In the case of the
measure-and-prepare channel $\Psi_U$, if we write the equation $\rho'
= \Psi_U(\rho) = \sum_{k=1}^d \braket{k|U^\dag \rho U |k} U
\ketbra{k}{k} U^\dag$ in terms of tensor indices we obtain
\begin{align}
  \rho'_{i,j}
  &= \sum_{k=1}^d \sum_{\ell,m} U^\dag_{k,\ell}\rho_{\ell,m}U_{m,k} U_{i,k}U^\dag_{k,j} \notag \\
  &= \sum_{\ell,m} \left(\sum_{k=1}^d U_{i,k}U^*_{j,k} U^*_{\ell,k}U_{m,k} \right) \rho_{\ell,m} \notag \\
  &= \sum_{\ell,m} L^{\Psi_U}_{i,j,\ell,m} \, \rho_{\ell,m} 
  \; ,
\end{align}
where
\begin{equation}
  L^{\Psi_U}_{i,j,\ell,m}
  = \sum_k U_{i,k} U^*_{j,k} U^*_{\ell,k} U_{m,k}
  \; .
\end{equation}
Similarly, for the depolarizing channel $\rho' = \Psi_1(\rho) = {1
  \over d} \tr(\rho) \mathbbm{1}$ we obtain
\begin{align}
  \rho'_{i,j}
  &= {1 \over d} \sum_{\ell,m} \delta_{\ell,m}\rho_{\ell,m} \delta_{i,j} \notag \\
  &= \sum_{\ell,m} \left( {1 \over d} \delta_{i,j} \delta_{\ell,m} \right) \rho_{\ell,m} \notag \\
  &= \sum_{\ell,m} L^{\Psi_1}_{i,j,\ell,m} \, \rho_{\ell,m} 
  \; ,
\end{align}
where
\begin{equation}
  L^{\Psi_1}_{i,j,\ell,m} = {1 \over d} \, \delta_{i,j} \delta_{\ell,m}
  \; , \qquad
  \delta_{i,j} = \begin{cases}
    1 & i = j \\
    0 & i \not= j
    \end{cases}
  \; .
\end{equation}
For the identity channel,
\begin{equation}
  \rho'_{i,j} = \sum_{\ell,m} \left( \delta_{i,\ell}\delta_{j,m} \right) \rho_{\ell,m}
  = \sum_{\ell,m} L^{\text{id}}_{i,j,\ell,m} \, \rho_{\ell,m} 
  \; , \qquad
  L^{\text{id}}_{i,j,\ell,m} = \delta_{i,\ell}\delta_{j,m}
  \; .
\end{equation}

\begin{algorithm}
    \caption{Measure-and-Prepare Channel Generation for Unitary Designs}
    \label{alg:unitarydesign}
    \hspace*{\algorithmicindent} \textbf{Input:}
    number of qubits $n$, set of one-qubit gates $\mathcal{S}$,
    set of two-qubit gates $\mathcal{D}$, stopping criterion \\
    \hspace*{\algorithmicindent} \textbf{Output:}
    set of unitaries $\mathcal{U}$, set of measure-and-prepare
    superoperator tensors $\mathcal{L}$
    \begin{algorithmic}[1] 
      \State Initialize $\mathcal{U} \gets \{I^{\otimes n}\}, \mathcal{L} \gets \{L^{\Psi_{I^{\otimes n}}}\}$
      \State Initialize $Flag \gets True$
      \While{$Flag$ is $True$ and the stopping criterion has not been met}
          \State $Flag \gets False$
          \For{$U \in \mathcal{U}$}
              \For{$G \in \mathcal{S}$ and $k$ from 1 to $n$}
                      \State $V \gets (I^{\otimes k-1} \otimes G \otimes I^{\otimes n-k})U$
                      \If{$L^{\Psi_V} \notin \mathcal{L}$}
                          \State $\mathcal{U} \gets \mathcal{U} \cup \{G\}$
                          \State $\mathcal{L} \gets \mathcal{L} \cup \{L^{\Psi_V}\}$
                          \State $Flag \gets True$
                      \EndIf
              \EndFor
              \For{$G \in \mathcal{D}$}
                  \State $V \gets DU$
                  \If{$L^{\Psi_V} \notin \mathcal{L}$}
                      \State $\mathcal{U} \gets \mathcal{U} \cup \{G\}$
                      \State $\mathcal{L} \gets \mathcal{L} \cup \{L^{\Psi_V}\}$
                      \State $Flag \gets True$
                  \EndIf
              \EndFor
          \EndFor
      \EndWhile
      \State \textbf{return} $\mathcal{U},\mathcal{L}$
    \end{algorithmic}
\end{algorithm}

The above definitions can be readily implemented using numerical
packages that support tensor computations.  The resulting
4-dimensional superoperator tensors can then be flattened into
1-dimensional vectors for use in solving Eq.~\ref{eq:linearequation}.
Specifically, flattened versions of $-L^{\Psi_1}$ and $L^{\Psi_{U_j}}$
for each value of $j$ would be assembled to form the columns of an
``$A$'' matrix, $L^{\text{id}}$ would become a ``$b$'' vector, and the
coefficients in Eq.~\ref{eq:linearequation} would be obtained by
solving linear equations of the form $Ax=b$.

Algorithm~\ref{alg:unitarydesign} defines a method for constructing
sets of unitaries and their corresponding measure-and-prepare channels
that can then be used to solve Eq.~\ref{eq:linearequation} as
described above.  A desired subgroup of unitaries from which to
construct measure-and-prepare channels can be specified by providing
the generating elements of the subgroup as input.  In practice, when
applying the resulting unitary designs in actual circuits, {\it SWAP}
gates may be needed to move the wires that are to be cut to a
connected subgraph of qubits on a target quantum device.  Once moved,
it would be preferable not to apply any additional {\it SWAP} gates
when performing the measure-and-prepare operations needed for wire
cutting.  Algorithm~\ref{alg:unitarydesign} accommodates this need by
requiring that two sets of generating elements be provided: a set of
single-qubit gates $\mathcal{S}$ that can be applied to any qubit
being cut, and a set of two-qubit gates $\mathcal{D}$ that are
qubit-specific so that the connectivity graph of the target quantum
device can be respected by supplying only those qubit combinations
that are natively supported by the hardware.  In this manner,
Algorithm~\ref{alg:unitarydesign} can be used to generate customized
unitary designs for specific regions of specific target devices.

A third input to Algorithm~\ref{alg:unitarydesign} is a stopping
criterion.  As described earlier, the squared errors returned by
ordinary and non-negative least-squares can be used to determine when
a sufficient number of measure-and-prepare channels $\Psi_U$ have been
generated so that a solution then exists that satisfies the
constraints of those methods.  Once the squared error falls below a
desired tolerance (e.g., $10^{-12}$), the resulting coefficients can
then be used as a solution to the decomposition.  This is then the
stopping criterion used in the experiments conducted thus far, using
either ordinary or non-negative least-squares regression, depending on
the experiment.

Our initial experiments have focused on Clifford designs without
device connectivity constraints.  The single-qubit generating elements
employed were thus Hadamard gates $H$ and phase gates $S$, while the
two-qubit generators were {\it CNOT} gates across all pairs of qubits.

In our experience thus far in generating Clifford designs that satisfy
Eq.~\ref{eq:linearequation}, off-the-shelf ordinary least-squares
regression (i.e., numpy.linalg.lstsq) identifies decompositions that
require the fewest Clifford circuits, but the resulting $\gamma$'s
(i.e., the sums of the absolute values of the regression coefficients)
always seem to be greater than those obtained using non-negative
least-squares and linear programming.

Off-the-shelf non-negative least-squares regression (i.e.,
scipy.optimize.nnls) seems to identify decompositions that require
more Clifford circuits than ordinary least-squares, but the resulting
$\gamma$'s are then identical to the $\gamma$'s discussed earlier for
2-designs, but using far fewer Clifford circuits.  The resulting
Clifford circuits and their coefficients are presented in Appendix~A
for 1, 2, 3, and 4 wire cuts.  The numbers of unitaries in these
designs are 3, 5, 28, and 136, respectively, as compared to the sizes
of the corresponding Clifford groups, which are 8, 192, 92,160, and
743,178,240~\cite{oeis:A00395}.  Appendix~A includes the associated
probabilities of the resulting sets of Clifford circuits.  All of
these unitary designs satisfy both the original wire-cutting
decomposition of Lowe {\it et al.}\ as well as the improved
decomposition presented in the previous section.

Off-the-shelf linear programming (i.e., scipy.optimize.linprog) was
only applied once a sufficient number of measure-and-prepare channels
were generated for a solution to be obtained using non-negative
least-squares regression.  The linear programming solutions did not
improve the resulting $\gamma$'s, but the solutions that were found
were much larger in size than those obtained using non-negative least
squares.  Because these results are therefore inferior to those
obtained using non-negative least-squares in terms of their sizes,
they are not reported.  Interestingly, all solutions found using
linear programming also had non-negative coefficients, even though no
constraints were added forcing the coefficients to be non-negative.

\section{Comparisons with Harada et al.}
One very interesting aspect of the decomposition just published by
Harada {\it et al.}~\cite{Harada2023optimalwirecutting} is that the
unitary designs they consider contain only $d+1 = 2^n + 1$ unitaries,
including the identity.  Of the four designs presented in Appendix~A,
the first design for one qubit matches the design presented
in~\cite{Harada2023optimalwirecutting}.  The design for two qubits
matches the size of the design reported
in~\cite{Harada2023optimalwirecutting}, but it does not include the
identity circuit.  The other two designs contain 28 and 136 unitaries
for three and four-qubit wire cuts, respectively, versus the 9 and 17
unitaries that would be obtained by considering transformations from
the computational bases to mutually unbiased bases.  Because it is
easy to constrain least-squares methods to include specific
regressors, we re-executed our experiments to include the identity
circuit and obtained the unitary designs presented in Appendix~B.  The
resulting designs for three and four wire cuts then contained 17 and
122 unitaries, which is a mild improvement.  However, the numbers of
gates required to implement the unitaries then increased, which
indicates that a trade-off must therefore exist between the number of
unitaries that appear in a measure-and-prepare decomposition and the
number of gates needed to implement those unitaries.

As mentioned in the Introduction, the unitary design presented by
Harada {\it et al.}~\cite{Harada2023optimalwirecutting} also satisfies
the decomposition presented here.  From a practical point of view, the
question then becomes which decomposition is easier to implement on a
target quantum device.  Algorithm~\ref{alg:altmontecarlo} was
introduced in Section~2 to provide an alternative means of preparing
the mixed states that are needed so they can be more easily
implemented on existing hardware.  A similar type of mixed state
appears in the decomposition of Harada {\it et al.} (see
Eq.~\ref{eq:HaradaMixedState}).  A similar approach could potentially
be used to handle this mixed state as was used in developing
Algorithm~\ref{alg:altmontecarlo}.  The resulting algorithm, though,
would likely be very similar to Algorithm~\ref{alg:altmontecarlo},
since both decompositions are basically algebraic variations of each
other.

\section{Conclusions}
With the modified decomposition presented here, the $\gamma$ factor
that determines the rate of convergence of estimation errors for
expected values of observables is reduced from $\gamma = 2^{n+1} + 1$
to $\gamma = 2^{n+1} - 1$, which matches the optimum $\gamma$ reported
by Brenner {\it et al.}~\cite{Brenner2023optimalwirecutting}, but
without using ancilla qubits in the wire cutting.  This $\gamma$ also
matches that of the recently published decomposition reported by
Harada {\it et al.}~\cite{Harada2023optimalwirecutting}.

The algorithms explored for constructing improved unitary designs
produce much, much smaller sets of unitaries than 2-designs, which
then makes both the original decomposition of Lowe {\it
  al.}~\cite{Lowe2023fastquantumcircuit}, and the improved
decomposition presented here, much more practical for real-world
application.  The algorithm for generating candidate
measure-and-prepare channels allows custom sets of generating gates to
be supplied as input, which then enables custom unitary designs to be
generated that respect qubit connectivity patterns in specific regions
of specific target devices.  This capability could potentially be
leveraged as an optimization when transpiling cut circuits to specific
target devices.  However, the sizes of the resulting unitary designs
can be further improved as demonstrated by the unitary designs
reported by Harada {\it et al.}~\cite{Harada2023optimalwirecutting}.
Further research is therefore needed to produce custom unitary designs
of comparable size that respect qubit connectivity patterns of target
quantum devices.

\section*{Acknowledgments}
We are very grateful to David Sutter and Stefan W\"{o}rner for their
comments and recommendations for improving earlier versions of this
manuscript.

\clearpage
\appendix
\section{Appendix}

The following are the unitary designs obtained for 1, 2, 3, and 4 wire
cuts using non-negative least-squares regression.  The sums of these
coefficients match the value of $\gamma = 2^{n+1} + 1$ defined in the
decomposition of Lowe {\it et al.}~\cite{Lowe2023fastquantumcircuit}.
The probabilities define the corresponding unitary designs.  All of
these designs satisfy both the original decomposition of Lowe {\it et
  al.}\ and the improved decomposition presented in this paper.

\subsection{One Wire Cut, 3 Unitaries}
\begin{longtable}{c c l c}
\textbf{Coefficient} & \textbf{Probabilty} & \textbf{PseudoQASM Circuit} & \textbf{Num Gates} \\
\endhead
2.000000 & & Complete Depolarization & \\
1.000000 & 0.33333333 & Identity Circuit & 0 \\
1.000000 & 0.33333333 & h 0; & 1 \\
1.000000 & 0.33333333 & h 0; s 0; & 2 \\
\end{longtable}

\subsection{Two Wire Cuts, 5 Unitaries}
\begin{longtable}{c c l c}
\textbf{Coefficient} & \textbf{Probabilty} & \textbf{PseudoQASM Circuit} & \textbf{Num Gates} \\
\endhead
4.000000 & & Complete Depolarization & \\
1.000000 & 0.20000000 & h 0; h 1; & 2 \\
1.000000 & 0.20000000 & h 0; s 0; & 2 \\
1.000000 & 0.20000000 & h 1; s 1; & 2 \\
1.000000 & 0.20000000 & h 0; s 0; cx 0, 1; & 3 \\
1.000000 & 0.20000000 & h 0; cx 0, 1; h 0; & 3 \\
\end{longtable}

\subsection{Three Wire Cuts, 28 Unitaries}
\begin{longtable}{c c l c}
\textbf{Coefficient} & \textbf{Probabilty} & \textbf{PseudoQASM Circuit} & \textbf{Num Gates} \\
\endhead
8.000000 & & Complete Depolarization & \\
0.500000 & 0.05555556 & h 0; h 1; h 2; & 3 \\
0.125000 & 0.01388889 & h 0; h 1; cx 1, 2; & 3 \\
0.500000 & 0.05555556 & h 0; s 0; cx 0, 2; & 3 \\
0.625000 & 0.06944444 & h 0; cx 0, 1; h 0; & 3 \\
0.125000 & 0.01388889 & h 0; h 1; s 0; s 1; & 4 \\
0.375000 & 0.04166667 & h 0; h 1; s 0; cx 0, 2; & 4 \\
0.250000 & 0.02777778 & h 0; h 1; s 0; cx 1, 0; & 4 \\
0.125000 & 0.01388889 & h 0; h 2; s 0; s 2; & 4 \\
0.250000 & 0.02777778 & h 0; h 2; s 0; cx 2, 0; & 4 \\
0.125000 & 0.01388889 & h 0; s 0; cx 0, 2; h 2; & 4 \\
0.250000 & 0.02777778 & h 1; h 2; s 1; s 2; & 4 \\
0.125000 & 0.01388889 & h 1; h 2; s 1; cx 2, 1; & 4 \\
0.625000 & 0.06944444 & h 1; s 1; cx 1, 2; h 2; & 4 \\
0.375000 & 0.04166667 & h 0; h 1; s 0; s 1; cx 1, 2; & 5 \\
0.125000 & 0.01388889 & h 0; h 1; s 0; cx 0, 2; h 0; & 5 \\
0.375000 & 0.04166667 & h 0; h 1; s 0; cx 1, 2; h 1; & 5 \\
0.625000 & 0.06944444 & h 0; h 1; s 1; cx 0, 2; h 0; & 5 \\
0.250000 & 0.02777778 & h 0; h 1; s 1; cx 0, 2; cx 1, 0; & 5 \\
0.375000 & 0.04166667 & h 0; h 1; s 1; cx 1, 2; h 1; & 5 \\
0.250000 & 0.02777778 & h 0; h 1; cx 0, 2; h 0; cx 1, 0; & 5 \\
0.250000 & 0.02777778 & h 0; h 1; cx 0, 2; h 0; cx 1, 2; & 5 \\
0.625000 & 0.06944444 & h 0; h 2; s 0; s 2; cx 0, 1; & 5 \\
0.375000 & 0.04166667 & h 0; h 2; s 0; cx 0, 1; h 0; & 5 \\
0.125000 & 0.01388889 & h 0; h 2; s 0; cx 0, 1; h 1; & 5 \\
0.375000 & 0.04166667 & h 0; h 2; s 0; cx 2, 0; cx 0, 1; & 5 \\
0.125000 & 0.01388889 & h 0; s 0; cx 0, 1; h 0; cx 0, 2; & 5 \\
0.500000 & 0.05555556 & h 0; s 0; cx 0, 1; h 1; cx 1, 2; & 5 \\
0.250000 & 0.02777778 & h 0; cx 0, 1; h 0; cx 0, 2; cx 1, 0; & 5 \\
\end{longtable}

\subsection{Four Wire Cuts, 136 Unitaries}
\begin{longtable}{c c l c}
\textbf{Coefficient} & \textbf{Probabilty} & \textbf{PseudoQASM Circuit} & \textbf{Num Gates} \\
\endhead
16.000000 & & Complete Depolarization & \\
0.077381 & 0.00455182 & h 0; h 1; s 0; cx 0, 3; h 3; & 5 \\
0.177954 & 0.01046790 & h 0; h 1; s 1; cx 0, 2; cx 1, 0; & 5 \\
0.117719 & 0.00692462 & h 0; h 2; s 0; cx 0, 3; h 0; & 5 \\
0.070027 & 0.00411922 & h 0; h 2; s 0; cx 0, 3; cx 2, 0; & 5 \\
0.018012 & 0.00105952 & h 0; h 2; s 0; cx 0, 3; cx 2, 3; & 5 \\
0.089097 & 0.00524098 & h 0; h 2; s 2; cx 0, 1; cx 2, 3; & 5 \\
0.071553 & 0.00420902 & h 0; h 3; s 0; cx 0, 1; h 1; & 5 \\
0.105949 & 0.00623232 & h 1; h 2; s 2; cx 2, 3; h 3; & 5 \\
0.072323 & 0.00425430 & h 0; h 1; h 2; h 3; s 0; cx 1, 0; & 6 \\
0.084937 & 0.00499627 & h 0; h 1; h 2; h 3; s 1; cx 2, 1; & 6 \\
0.054453 & 0.00320312 & h 0; h 1; h 2; h 3; s 2; cx 3, 2; & 6 \\
0.048801 & 0.00287065 & h 0; h 1; h 2; s 0; cx 2, 0; cx 2, 3; & 6 \\
0.182626 & 0.01074273 & h 0; h 1; h 3; s 0; s 1; cx 3, 0; & 6 \\
0.057030 & 0.00335469 & h 0; h 1; h 3; s 0; s 3; cx 0, 2; & 6 \\
0.081442 & 0.00479070 & h 0; h 1; h 3; s 0; s 3; cx 1, 0; & 6 \\
0.178872 & 0.01052189 & h 0; h 1; h 3; s 1; s 3; cx 1, 2; & 6 \\
0.023948 & 0.00140871 & h 0; h 1; s 0; s 1; cx 1, 2; h 1; & 6 \\
0.099682 & 0.00586366 & h 0; h 1; s 0; cx 0, 2; h 0; cx 1, 2; & 6 \\
0.055007 & 0.00323569 & h 0; h 1; s 0; cx 0, 2; h 2; cx 1, 0; & 6 \\
0.065589 & 0.00385817 & h 0; h 1; s 0; cx 0, 2; cx 1, 0; h 0; & 6 \\
0.026757 & 0.00157396 & h 0; h 1; s 0; cx 0, 2; cx 1, 0; cx 1, 3; & 6 \\
0.137451 & 0.00808535 & h 0; h 1; s 0; cx 0, 2; cx 1, 2; h 2; & 6 \\
0.307929 & 0.01811345 & h 0; h 1; s 0; cx 0, 3; h 0; cx 0, 2; & 6 \\
0.108920 & 0.00640705 & h 0; h 1; s 0; cx 0, 3; h 0; cx 1, 3; & 6 \\
0.034429 & 0.00202525 & h 0; h 1; s 0; cx 0, 3; h 3; cx 1, 0; & 6 \\
0.156286 & 0.00919332 & h 0; h 1; s 0; cx 0, 3; cx 1, 0; cx 1, 2; & 6 \\
0.069658 & 0.00409751 & h 0; h 1; s 1; cx 0, 2; cx 0, 3; cx 1, 0; & 6 \\
0.300465 & 0.01767439 & h 0; h 1; s 1; cx 0, 2; cx 0, 3; cx 1, 2; & 6 \\
0.342935 & 0.02017266 & h 0; h 1; s 1; cx 0, 3; cx 1, 0; h 1; & 6 \\
0.351542 & 0.02067894 & h 0; h 1; s 1; cx 1, 2; h 1; cx 1, 3; & 6 \\
0.231054 & 0.01359143 & h 0; h 1; s 1; cx 1, 3; h 1; cx 1, 2; & 6 \\
0.056278 & 0.00331044 & h 0; h 1; cx 0, 2; h 0; cx 0, 3; cx 1, 3; & 6 \\
0.054619 & 0.00321287 & h 0; h 1; cx 0, 2; h 0; cx 1, 0; cx 1, 3; & 6 \\
0.055383 & 0.00325779 & h 0; h 1; cx 0, 2; h 0; cx 1, 3; cx 2, 1; & 6 \\
0.128292 & 0.00754657 & h 0; h 1; cx 0, 2; cx 0, 3; h 3; cx 1, 2; & 6 \\
0.068560 & 0.00403294 & h 0; h 1; cx 0, 2; cx 0, 3; cx 1, 2; h 1; & 6 \\
0.132793 & 0.00781133 & h 0; h 1; cx 0, 3; h 0; cx 1, 0; cx 1, 2; & 6 \\
0.171234 & 0.01007257 & h 0; h 1; cx 0, 3; cx 1, 0; cx 1, 2; h 1; & 6 \\
0.111486 & 0.00655800 & h 0; h 2; h 3; s 0; s 3; cx 0, 1; & 6 \\
0.049308 & 0.00290046 & h 0; h 2; h 3; s 0; s 3; cx 2, 0; & 6 \\
0.067551 & 0.00397362 & h 0; h 2; h 3; s 0; cx 2, 0; cx 0, 1; & 6 \\
0.067291 & 0.00395829 & h 0; h 2; s 0; s 2; cx 0, 3; h 0; & 6 \\
0.278136 & 0.01636096 & h 0; h 2; s 0; cx 0, 1; h 0; cx 0, 3; & 6 \\
0.172267 & 0.01013333 & h 0; h 2; s 0; cx 0, 1; h 1; cx 1, 3; & 6 \\
0.040293 & 0.00237016 & h 0; h 2; s 0; cx 0, 3; cx 2, 0; cx 0, 1; & 6 \\
0.107591 & 0.00632888 & h 0; h 2; s 0; cx 0, 3; cx 2, 3; h 2; & 6 \\
0.232530 & 0.01367821 & h 0; h 2; s 2; cx 0, 3; h 0; cx 2, 3; & 6 \\
0.088464 & 0.00520376 & h 0; h 2; cx 0, 1; cx 0, 3; cx 2, 3; h 3; & 6 \\
0.146617 & 0.00862451 & h 0; h 2; cx 0, 3; h 0; cx 2, 0; cx 0, 3; & 6 \\
0.095580 & 0.00562235 & h 0; h 3; s 0; cx 0, 2; h 0; cx 0, 1; & 6 \\
0.012711 & 0.00074768 & h 1; h 2; h 3; s 1; s 2; cx 3, 2; & 6 \\
0.036036 & 0.00211979 & h 1; h 2; h 3; s 1; s 3; cx 2, 1; & 6 \\
0.180186 & 0.01059920 & h 1; h 2; s 1; cx 1, 3; h 1; cx 2, 3; & 6 \\
0.226670 & 0.01333355 & h 1; h 2; s 1; cx 1, 3; h 3; cx 2, 1; & 6 \\
0.176104 & 0.01035906 & h 1; h 2; s 2; cx 1, 3; cx 2, 1; h 2; & 6 \\
0.193634 & 0.01139026 & h 0; h 1; h 2; s 0; s 1; cx 0, 3; cx 2, 0; & 7 \\
0.311273 & 0.01831017 & h 0; h 1; h 2; s 0; s 1; cx 2, 3; h 2; & 7 \\
0.138706 & 0.00815915 & h 0; h 1; h 2; s 0; s 2; cx 0, 3; cx 1, 3; & 7 \\
0.125152 & 0.00736191 & h 0; h 1; h 2; s 0; s 2; cx 1, 0; cx 1, 3; & 7 \\
0.104305 & 0.00613559 & h 0; h 1; h 2; s 0; s 2; cx 1, 0; cx 2, 3; & 7 \\
0.112402 & 0.00661191 & h 0; h 1; h 2; s 0; s 2; cx 1, 3; cx 2, 1; & 7 \\
0.149947 & 0.00882044 & h 0; h 1; h 2; s 0; cx 0, 3; cx 1, 3; cx 2, 1; & 7 \\
0.104210 & 0.00612999 & h 0; h 1; h 2; s 0; cx 1, 0; cx 0, 3; cx 2, 3; & 7 \\
0.111984 & 0.00658729 & h 0; h 1; h 2; s 0; cx 2, 0; cx 0, 3; h 0; & 7 \\
0.104404 & 0.00614139 & h 0; h 1; h 2; s 0; cx 2, 0; cx 2, 3; h 2; & 7 \\
0.203606 & 0.01197682 & h 0; h 1; h 2; s 1; s 2; cx 0, 3; h 0; & 7 \\
0.099142 & 0.00583188 & h 0; h 1; h 2; s 1; s 2; cx 1, 3; h 3; & 7 \\
0.132070 & 0.00776880 & h 0; h 1; h 2; s 2; cx 0, 3; h 0; cx 2, 0; & 7 \\
0.069135 & 0.00406678 & h 0; h 1; h 2; s 2; cx 0, 3; cx 1, 0; cx 2, 0; & 7 \\
0.083391 & 0.00490533 & h 0; h 1; h 3; s 0; s 1; cx 1, 2; h 1; & 7 \\
0.208424 & 0.01226026 & h 0; h 1; h 3; s 0; s 1; cx 1, 2; h 2; & 7 \\
0.018317 & 0.00107749 & h 0; h 1; h 3; s 0; s 1; cx 1, 2; cx 3, 0; & 7 \\
0.100713 & 0.00592429 & h 0; h 1; h 3; s 0; s 1; cx 3, 0; cx 0, 1; & 7 \\
0.096150 & 0.00565587 & h 0; h 1; h 3; s 0; s 1; cx 3, 0; cx 0, 2; & 7 \\
0.003405 & 0.00020032 & h 0; h 1; h 3; s 0; s 3; cx 1, 0; cx 0, 2; & 7 \\
0.260562 & 0.01532716 & h 0; h 1; h 3; s 0; s 3; cx 1, 2; h 1; & 7 \\
0.121145 & 0.00712619 & h 0; h 1; h 3; s 0; cx 1, 0; cx 1, 2; h 2; & 7 \\
0.039427 & 0.00231925 & h 0; h 1; h 3; s 0; cx 3, 0; cx 0, 2; cx 1, 0; & 7 \\
0.036867 & 0.00216862 & h 0; h 1; h 3; s 1; cx 0, 2; cx 1, 0; h 1; & 7 \\
0.197091 & 0.01159358 & h 0; h 1; s 0; s 1; cx 0, 2; h 0; cx 0, 1; & 7 \\
0.243278 & 0.01431048 & h 0; h 1; s 0; s 1; cx 0, 2; h 2; cx 2, 1; & 7 \\
0.057754 & 0.00339732 & h 0; h 1; s 0; cx 0, 2; h 0; cx 0, 3; cx 1, 0; & 7 \\
0.096286 & 0.00566388 & h 0; h 1; s 0; cx 0, 2; h 0; cx 1, 0; cx 0, 3; & 7 \\
0.051346 & 0.00302035 & h 0; h 1; s 0; cx 0, 2; h 0; cx 1, 2; cx 1, 3; & 7 \\
0.023306 & 0.00137093 & h 0; h 1; s 0; cx 0, 2; h 0; cx 1, 2; cx 2, 3; & 7 \\
0.219133 & 0.01289015 & h 0; h 1; s 0; cx 0, 2; h 0; cx 1, 3; h 1; & 7 \\
0.118390 & 0.00696409 & h 0; h 1; s 0; cx 0, 2; h 0; cx 2, 3; cx 1, 2; & 7 \\
0.342627 & 0.02015455 & h 0; h 1; s 0; cx 0, 2; h 2; cx 0, 3; cx 1, 0; & 7 \\
0.047876 & 0.00281621 & h 0; h 1; s 0; cx 0, 2; h 2; cx 0, 3; cx 1, 3; & 7 \\
0.380592 & 0.02238779 & h 0; h 1; s 0; cx 0, 2; h 2; cx 1, 2; cx 2, 3; & 7 \\
0.224981 & 0.01323416 & h 0; h 1; s 0; cx 0, 2; h 2; cx 1, 3; h 1; & 7 \\
0.055010 & 0.00323590 & h 0; h 1; s 0; cx 0, 2; cx 0, 3; cx 1, 2; h 1; & 7 \\
0.061342 & 0.00360834 & h 0; h 1; s 0; cx 0, 2; cx 1, 0; cx 1, 3; h 1; & 7 \\
0.009913 & 0.00058309 & h 0; h 1; s 0; cx 0, 2; cx 1, 0; cx 1, 3; h 3; & 7 \\
0.006922 & 0.00040716 & h 0; h 1; s 0; cx 0, 2; cx 1, 2; cx 1, 3; h 1; & 7 \\
0.113841 & 0.00669654 & h 0; h 1; s 0; cx 0, 2; cx 1, 2; cx 1, 3; h 3; & 7 \\
0.249738 & 0.01469046 & h 0; h 1; s 0; cx 0, 3; h 3; cx 1, 0; cx 0, 2; & 7 \\
0.242043 & 0.01423783 & h 0; h 1; s 0; cx 0, 3; h 3; cx 1, 2; cx 1, 3; & 7 \\
0.020993 & 0.00123491 & h 0; h 1; s 0; cx 1, 0; cx 1, 2; h 1; cx 1, 3; & 7 \\
0.149748 & 0.00880871 & h 0; h 1; s 0; cx 1, 0; cx 1, 2; h 1; cx 2, 3; & 7 \\
0.001248 & 0.00007342 & h 0; h 1; s 1; cx 0, 2; h 0; cx 1, 0; cx 0, 3; & 7 \\
0.337875 & 0.01987501 & h 0; h 1; s 1; cx 0, 2; h 0; cx 1, 3; h 1; & 7 \\
0.332991 & 0.01958770 & h 0; h 1; s 1; cx 0, 2; h 0; cx 1, 3; h 3; & 7 \\
0.139003 & 0.00817666 & h 0; h 1; s 1; cx 0, 2; cx 0, 3; cx 1, 3; h 1; & 7 \\
0.002701 & 0.00015886 & h 0; h 1; s 1; cx 0, 2; cx 1, 0; h 1; cx 1, 3; & 7 \\
0.133693 & 0.00786431 & h 0; h 1; s 1; cx 0, 2; cx 1, 3; h 1; cx 1, 0; & 7 \\
0.151284 & 0.00889903 & h 0; h 1; s 1; cx 0, 2; cx 1, 3; h 3; cx 3, 0; & 7 \\
0.116149 & 0.00683229 & h 0; h 1; s 1; cx 0, 3; h 0; cx 1, 2; h 2; & 7 \\
0.052431 & 0.00308417 & h 0; h 1; s 1; cx 0, 3; cx 1, 0; cx 1, 2; h 2; & 7 \\
0.078384 & 0.00461084 & h 0; h 1; s 1; cx 0, 3; cx 1, 2; h 1; cx 1, 0; & 7 \\
0.085250 & 0.00501469 & h 0; h 1; cx 0, 2; h 0; cx 0, 3; cx 1, 3; cx 2, 0; & 7 \\
0.156046 & 0.00917916 & h 0; h 1; cx 0, 2; h 0; cx 0, 3; cx 2, 0; cx 1, 2; & 7 \\
0.200294 & 0.01178202 & h 0; h 1; cx 0, 2; h 0; cx 1, 0; cx 0, 2; cx 0, 3; & 7 \\
0.097268 & 0.00572162 & h 0; h 1; cx 0, 2; h 0; cx 1, 0; cx 0, 2; cx 1, 3; & 7 \\
0.134494 & 0.00791138 & h 0; h 1; cx 0, 2; h 0; cx 1, 0; cx 0, 2; cx 2, 3; & 7 \\
0.146746 & 0.00863209 & h 0; h 1; cx 0, 2; cx 0, 3; cx 1, 3; h 1; cx 0, 1; & 7 \\
0.184532 & 0.01085485 & h 0; h 1; cx 0, 2; cx 1, 0; cx 1, 3; h 1; cx 1, 0; & 7 \\
0.040634 & 0.00239023 & h 0; h 1; cx 0, 3; h 0; cx 1, 0; cx 0, 3; cx 1, 2; & 7 \\
0.045664 & 0.00268612 & h 0; h 2; h 3; s 0; s 2; cx 0, 1; h 0; & 7 \\
0.001225 & 0.00007205 & h 0; h 2; h 3; s 0; s 2; cx 0, 1; h 1; & 7 \\
0.035090 & 0.00206411 & h 0; h 2; h 3; s 0; s 2; cx 0, 1; cx 3, 2; & 7 \\
0.111340 & 0.00654940 & h 0; h 2; h 3; s 0; cx 2, 0; cx 0, 1; h 0; & 7 \\
0.248069 & 0.01459231 & h 0; h 2; h 3; s 0; cx 3, 0; cx 0, 1; h 0; & 7 \\
0.221859 & 0.01305053 & h 0; h 2; h 3; s 2; s 3; cx 0, 1; h 0; & 7 \\
0.097355 & 0.00572679 & h 0; h 2; s 0; s 2; cx 0, 1; h 1; cx 1, 3; & 7 \\
0.073219 & 0.00430699 & h 0; h 2; s 0; s 2; cx 0, 3; h 0; cx 0, 2; & 7 \\
0.041301 & 0.00242946 & h 0; h 2; s 0; s 2; cx 0, 3; h 3; cx 2, 0; & 7 \\
0.017849 & 0.00104993 & h 0; h 2; s 0; s 2; cx 2, 3; h 2; cx 2, 0; & 7 \\
0.058538 & 0.00344344 & h 0; h 2; s 0; s 2; cx 2, 3; h 3; cx 0, 2; & 7 \\
0.373068 & 0.02194518 & h 0; h 2; s 0; cx 0, 1; h 0; cx 1, 3; cx 2, 3; & 7 \\
0.153232 & 0.00901365 & h 0; h 2; s 0; cx 0, 1; h 1; cx 2, 3; h 2; & 7 \\
0.169538 & 0.00997284 & h 0; h 2; s 2; cx 0, 1; cx 0, 3; cx 2, 3; h 2; & 7 \\
0.250401 & 0.01472949 & h 0; h 2; s 2; cx 0, 1; cx 2, 3; h 3; cx 0, 3; & 7 \\
0.052500 & 0.00308822 & h 0; h 2; cx 0, 1; cx 0, 3; cx 2, 3; h 2; cx 0, 2; & 7 \\
0.123896 & 0.00728798 & h 1; h 2; h 3; s 1; s 2; cx 3, 2; cx 2, 1; & 7 \\
0.138447 & 0.00814395 & h 1; h 2; s 1; s 2; cx 1, 3; h 3; cx 3, 2; & 7 \\
\end{longtable}

\section{Appendix}

The following are the unitary designs obtained for 1, 2, 3, and 4 wire
cuts using non-negative least-squares regression with the constraint
that the identity circuit must be included in the unitary design.  The
sums of these coefficients match the value of $\gamma = 2^{n+1} + 1$
defined in the decomposition of Lowe {\it et
  al.}~\cite{Lowe2023fastquantumcircuit}.  The probabilities define
the corresponding unitary designs.  All of these designs satisfy the
original decomposition of Lowe {\it et al.}\ and the improved
decomposition presented in this paper.

\subsection{One Wire Cut, 3 Unitaries}
\begin{longtable}{c c l c}
\textbf{Coefficient} & \textbf{Probabilty} & \textbf{PseudoQASM Circuit} & \textbf{Num Gates} \\
\endhead
2.000000 & & Complete Depolarization & \\
1.000000 & 0.33333333 & Identity Circuit & 0 \\
1.000000 & 0.33333333 & h 0; & 1 \\
1.000000 & 0.33333333 & h 0; s 0; & 2 \\
\end{longtable}

\subsection{Two Wire Cuts, 5 Unitaries}
\begin{longtable}{c c l c}
\textbf{Coefficient} & \textbf{Probabilty} & \textbf{PseudoQASM Circuit} & \textbf{Num Gates} \\
\endhead
4.000000 & & Complete Depolarization & \\
1.000000 & 0.20000000 & Identity Circuit & 0 \\
1.000000 & 0.20000000 & h 0; h 1; & 2 \\
1.000000 & 0.20000000 & h 0; h 1; s 0; s 1; & 4 \\
1.000000 & 0.20000000 & h 0; s 0; cx 0, 1; h 0; & 4 \\
1.000000 & 0.20000000 & h 0; s 0; cx 0, 1; h 1; & 4 \\
\end{longtable}

\subsection{Three Wire Cuts, 17 Unitaries}
\begin{longtable}{c c l c}
\textbf{Coefficient} & \textbf{Probabilty} & \textbf{PseudoQASM Circuit} & \textbf{Num Gates} \\
\endhead
8.000000 & & Complete Depolarization & \\
1.000000 & 0.11111111 & Identity Circuit & 0 \\
0.500000 & 0.05555556 & h 0; h 2; cx 0, 1; h 0; & 4 \\
0.500000 & 0.05555556 & h 0; h 1; h 2; s 0; s 2; & 5 \\
0.500000 & 0.05555556 & h 0; h 1; h 2; s 0; cx 1, 0; & 5 \\
0.500000 & 0.05555556 & h 0; h 1; s 0; cx 0, 2; h 2; & 5 \\
0.500000 & 0.05555556 & h 0; h 1; s 0; cx 1, 2; h 1; & 5 \\
0.500000 & 0.05555556 & h 0; h 1; s 1; cx 1, 2; h 1; & 5 \\
0.500000 & 0.05555556 & h 0; h 1; s 1; cx 1, 2; h 2; & 5 \\
0.500000 & 0.05555556 & h 0; h 1; h 2; s 0; s 1; cx 2, 0; & 6 \\
0.500000 & 0.05555556 & h 0; h 1; h 2; s 0; cx 1, 0; cx 2, 0; & 6 \\
0.500000 & 0.05555556 & h 0; h 1; s 0; s 1; cx 0, 2; h 0; & 6 \\
0.500000 & 0.05555556 & h 0; h 1; s 0; cx 0, 2; h 0; cx 1, 2; & 6 \\
0.500000 & 0.05555556 & h 0; h 1; s 0; cx 0, 2; cx 1, 0; h 1; & 6 \\
0.500000 & 0.05555556 & h 0; h 1; s 0; cx 1, 0; cx 0, 2; h 2; & 6 \\
0.500000 & 0.05555556 & h 0; h 1; s 1; cx 0, 2; h 0; cx 1, 0; & 6 \\
0.500000 & 0.05555556 & h 0; h 1; cx 0, 2; h 0; cx 1, 0; cx 0, 2; & 6 \\
0.500000 & 0.05555556 & h 0; h 2; s 0; s 2; cx 0, 1; h 1; & 6 \\
\end{longtable}

\subsection{Four Wire Cuts, 122 Unitaries}
\begin{longtable}{c c l c}
\textbf{Coefficient} & \textbf{Probabilty} & \textbf{PseudoQASM Circuit} & \textbf{Num Gates} \\
\endhead
16.000000 & & Complete Depolarization & \\
1.000000 & 0.05882353 & Identity Circuit & 0 \\
0.214221 & 0.01260126 & h 0; h 1; h 2; s 0; cx 0, 3; h 0; cx 1, 3; & 7 \\
0.003864 & 0.00022732 & h 0; h 1; h 2; s 0; cx 1, 0; cx 1, 3; h 1; & 7 \\
0.055663 & 0.00327429 & h 0; h 1; h 2; s 0; cx 1, 0; cx 1, 3; h 3; & 7 \\
0.201531 & 0.01185477 & h 0; h 1; h 2; s 0; cx 1, 3; h 1; cx 2, 0; & 7 \\
0.095508 & 0.00561810 & h 0; h 1; h 2; s 2; cx 1, 3; cx 2, 1; h 2; & 7 \\
0.084546 & 0.00497331 & h 0; h 1; s 0; cx 0, 2; h 0; cx 1, 3; h 1; & 7 \\
0.092639 & 0.00544933 & h 0; h 1; s 0; cx 0, 2; h 2; cx 1, 3; h 1; & 7 \\
0.175169 & 0.01030405 & h 0; h 2; s 0; cx 0, 1; h 1; cx 2, 3; h 2; & 7 \\
0.304852 & 0.01793247 & h 0; h 1; h 2; h 3; s 0; s 1; s 2; cx 3, 1; & 8 \\
0.324670 & 0.01909825 & h 0; h 1; h 2; h 3; s 0; s 1; s 2; cx 3, 2; & 8 \\
0.197758 & 0.01163285 & h 0; h 1; h 2; h 3; s 0; s 1; cx 2, 0; cx 0, 1; & 8 \\
0.169863 & 0.00999196 & h 0; h 1; h 2; h 3; s 0; s 1; cx 2, 0; cx 2, 1; & 8 \\
0.171991 & 0.01011710 & h 0; h 1; h 2; h 3; s 0; s 1; cx 2, 1; cx 3, 0; & 8 \\
0.016726 & 0.00098386 & h 0; h 1; h 2; h 3; s 0; s 1; cx 3, 1; cx 1, 0; & 8 \\
0.185850 & 0.01093233 & h 0; h 1; h 2; h 3; s 0; s 2; s 3; cx 1, 0; & 8 \\
0.012637 & 0.00074338 & h 0; h 1; h 2; h 3; s 0; s 2; cx 1, 0; cx 3, 2; & 8 \\
0.124522 & 0.00732482 & h 0; h 1; h 2; h 3; s 0; s 2; cx 3, 0; cx 0, 2; & 8 \\
0.136982 & 0.00805779 & h 0; h 1; h 2; h 3; s 0; s 2; cx 3, 2; cx 2, 0; & 8 \\
0.124977 & 0.00735156 & h 0; h 1; h 2; h 3; s 1; s 2; cx 3, 1; cx 1, 2; & 8 \\
0.044544 & 0.00262021 & h 0; h 1; h 2; h 3; s 1; s 2; cx 3, 1; cx 3, 2; & 8 \\
0.148831 & 0.00875475 & h 0; h 1; h 2; s 0; s 1; cx 0, 3; h 0; cx 0, 1; & 8 \\
0.165057 & 0.00970923 & h 0; h 1; h 2; s 0; s 1; cx 0, 3; h 0; cx 1, 3; & 8 \\
0.194340 & 0.01143175 & h 0; h 1; h 2; s 0; s 1; cx 0, 3; h 3; cx 1, 0; & 8 \\
0.223555 & 0.01315028 & h 0; h 1; h 2; s 0; s 1; cx 0, 3; cx 2, 3; h 3; & 8 \\
0.138879 & 0.00816932 & h 0; h 1; h 2; s 0; s 1; cx 1, 3; h 1; cx 1, 0; & 8 \\
0.062420 & 0.00367175 & h 0; h 1; h 2; s 0; s 1; cx 1, 3; h 3; cx 0, 1; & 8 \\
0.054081 & 0.00318126 & h 0; h 1; h 2; s 0; s 1; cx 2, 0; cx 0, 3; h 3; & 8 \\
0.082694 & 0.00486435 & h 0; h 1; h 2; s 0; s 1; cx 2, 1; cx 2, 3; h 2; & 8 \\
0.262084 & 0.01541668 & h 0; h 1; h 2; s 0; s 1; cx 2, 1; cx 2, 3; h 3; & 8 \\
0.009140 & 0.00053763 & h 0; h 1; h 2; s 0; s 2; cx 0, 3; h 0; cx 0, 2; & 8 \\
0.035367 & 0.00208038 & h 0; h 1; h 2; s 0; s 2; cx 0, 3; h 0; cx 2, 3; & 8 \\
0.092308 & 0.00542988 & h 0; h 1; h 2; s 0; s 2; cx 0, 3; h 3; cx 2, 0; & 8 \\
0.338913 & 0.01993607 & h 0; h 1; h 2; s 0; s 2; cx 0, 3; h 3; cx 3, 2; & 8 \\
0.080891 & 0.00475827 & h 0; h 1; h 2; s 0; s 2; cx 0, 3; cx 1, 3; h 3; & 8 \\
0.102957 & 0.00605628 & h 0; h 1; h 2; s 0; s 2; cx 1, 0; cx 0, 3; h 0; & 8 \\
0.229110 & 0.01347703 & h 0; h 1; h 2; s 0; s 2; cx 1, 0; cx 0, 3; h 3; & 8 \\
0.056791 & 0.00334063 & h 0; h 1; h 2; s 0; s 2; cx 1, 0; cx 1, 3; h 3; & 8 \\
0.025700 & 0.00151179 & h 0; h 1; h 2; s 0; s 2; cx 1, 3; h 1; cx 2, 3; & 8 \\
0.262768 & 0.01545696 & h 0; h 1; h 2; s 0; s 2; cx 2, 3; h 3; cx 0, 2; & 8 \\
0.031957 & 0.00187982 & h 0; h 1; h 2; s 0; cx 0, 3; h 3; cx 1, 0; cx 2, 0; & 8 \\
0.155043 & 0.00912018 & h 0; h 1; h 2; s 0; cx 0, 3; cx 1, 0; h 0; cx 2, 3; & 8 \\
0.300150 & 0.01765589 & h 0; h 1; h 2; s 0; cx 0, 3; cx 1, 0; cx 2, 0; h 2; & 8 \\
0.239679 & 0.01409875 & h 0; h 1; h 2; s 0; cx 0, 3; cx 1, 3; h 1; cx 2, 0; & 8 \\
0.051930 & 0.00305471 & h 0; h 1; h 2; s 0; cx 0, 3; cx 1, 3; h 1; cx 2, 3; & 8 \\
0.076754 & 0.00451491 & h 0; h 1; h 2; s 0; cx 0, 3; cx 1, 3; h 3; cx 2, 0; & 8 \\
0.101581 & 0.00597537 & h 0; h 1; h 2; s 0; cx 0, 3; cx 1, 3; h 3; cx 2, 1; & 8 \\
0.132617 & 0.00780101 & h 0; h 1; h 2; s 0; cx 0, 3; cx 1, 3; cx 2, 1; h 2; & 8 \\
0.191823 & 0.01128368 & h 0; h 1; h 2; s 0; cx 1, 0; cx 0, 3; h 0; cx 2, 3; & 8 \\
0.171652 & 0.01009717 & h 0; h 1; h 2; s 0; cx 1, 0; cx 0, 3; cx 2, 3; h 2; & 8 \\
0.030134 & 0.00177258 & h 0; h 1; h 2; s 0; cx 1, 0; cx 0, 3; cx 2, 3; h 3; & 8 \\
0.242716 & 0.01427742 & h 0; h 1; h 2; s 0; cx 1, 0; cx 1, 3; h 1; cx 2, 0; & 8 \\
0.079205 & 0.00465911 & h 0; h 1; h 2; s 0; cx 1, 0; cx 1, 3; cx 2, 1; h 2; & 8 \\
0.183147 & 0.01077338 & h 0; h 1; h 2; s 0; cx 1, 0; cx 2, 0; cx 2, 3; h 3; & 8 \\
0.016263 & 0.00095667 & h 0; h 1; h 2; s 0; cx 1, 3; h 1; cx 2, 0; cx 0, 1; & 8 \\
0.100611 & 0.00591827 & h 0; h 1; h 2; s 0; cx 1, 3; cx 2, 0; cx 0, 1; h 0; & 8 \\
0.126487 & 0.00744043 & h 0; h 1; h 2; s 1; s 2; cx 1, 3; h 1; cx 1, 2; & 8 \\
0.120775 & 0.00710439 & h 0; h 1; h 2; s 1; s 2; cx 1, 3; h 1; cx 2, 3; & 8 \\
0.060643 & 0.00356726 & h 0; h 1; h 2; s 1; s 2; cx 1, 3; h 3; cx 2, 1; & 8 \\
0.180021 & 0.01058947 & h 0; h 1; h 2; s 1; s 2; cx 1, 3; h 3; cx 3, 2; & 8 \\
0.247046 & 0.01453211 & h 0; h 1; h 2; s 1; s 2; cx 2, 3; h 3; cx 1, 2; & 8 \\
0.176296 & 0.01037034 & h 0; h 1; h 2; s 1; cx 0, 3; h 0; cx 1, 0; cx 2, 0; & 8 \\
0.184335 & 0.01084322 & h 0; h 1; h 2; s 1; cx 0, 3; h 0; cx 1, 3; cx 2, 1; & 8 \\
0.024029 & 0.00141344 & h 0; h 1; h 2; s 1; cx 0, 3; h 0; cx 1, 3; cx 2, 3; & 8 \\
0.100179 & 0.00589287 & h 0; h 1; h 2; s 1; cx 0, 3; h 0; cx 2, 0; cx 2, 1; & 8 \\
0.101584 & 0.00597550 & h 0; h 1; h 2; s 1; cx 0, 3; h 0; cx 2, 1; cx 1, 3; & 8 \\
0.041209 & 0.00242404 & h 0; h 1; h 2; s 1; cx 0, 3; h 0; cx 2, 1; cx 2, 3; & 8 \\
0.082703 & 0.00486488 & h 0; h 1; h 2; s 1; cx 0, 3; cx 1, 0; cx 2, 1; h 2; & 8 \\
0.098880 & 0.00581647 & h 0; h 1; h 2; s 1; cx 0, 3; cx 2, 0; cx 2, 1; h 2; & 8 \\
0.099215 & 0.00583615 & h 0; h 1; h 2; s 2; cx 0, 3; h 0; cx 1, 0; cx 2, 0; & 8 \\
0.066052 & 0.00388543 & h 0; h 1; h 2; s 2; cx 0, 3; h 0; cx 1, 3; cx 2, 1; & 8 \\
0.191521 & 0.01126596 & h 0; h 1; h 2; s 2; cx 0, 3; cx 1, 0; h 1; cx 2, 0; & 8 \\
0.086139 & 0.00506700 & h 0; h 1; h 2; s 2; cx 0, 3; cx 1, 0; cx 2, 0; h 2; & 8 \\
0.031068 & 0.00182752 & h 0; h 1; h 2; cx 0, 3; h 0; cx 1, 0; cx 0, 3; cx 2, 0; & 8 \\
0.382225 & 0.02248384 & h 0; h 1; h 2; cx 0, 3; h 0; cx 1, 0; cx 0, 3; cx 2, 1; & 8 \\
0.188316 & 0.01107742 & h 0; h 1; h 2; cx 0, 3; h 0; cx 1, 0; cx 0, 3; cx 2, 3; & 8 \\
0.100905 & 0.00593557 & h 0; h 1; h 2; cx 0, 3; h 0; cx 1, 0; cx 2, 0; cx 2, 3; & 8 \\
0.297486 & 0.01749917 & h 0; h 1; h 2; cx 0, 3; cx 1, 0; h 1; cx 2, 0; cx 2, 1; & 8 \\
0.048851 & 0.00287361 & h 0; h 1; h 3; s 0; s 1; cx 0, 2; h 0; cx 1, 2; & 8 \\
0.159416 & 0.00937744 & h 0; h 1; h 3; s 0; s 1; cx 0, 2; h 2; cx 1, 0; & 8 \\
0.103242 & 0.00607306 & h 0; h 1; h 3; s 0; s 1; cx 0, 2; h 2; cx 2, 1; & 8 \\
0.079588 & 0.00468168 & h 0; h 1; h 3; s 0; s 1; cx 1, 2; h 1; cx 1, 0; & 8 \\
0.241280 & 0.01419294 & h 0; h 1; h 3; s 0; s 1; cx 1, 2; h 2; cx 0, 1; & 8 \\
0.079088 & 0.00465223 & h 0; h 1; h 3; s 0; s 1; cx 3, 0; cx 0, 2; h 0; & 8 \\
0.318606 & 0.01874151 & h 0; h 1; h 3; s 0; s 1; cx 3, 0; cx 0, 2; h 2; & 8 \\
0.178152 & 0.01047950 & h 0; h 1; h 3; s 0; s 3; cx 0, 2; cx 1, 0; h 1; & 8 \\
0.172076 & 0.01012209 & h 0; h 1; h 3; s 0; s 3; cx 0, 2; cx 1, 2; h 1; & 8 \\
0.163125 & 0.00959561 & h 0; h 1; h 3; s 0; s 3; cx 1, 0; cx 1, 2; h 1; & 8 \\
0.054692 & 0.00321719 & h 0; h 1; h 3; s 0; s 3; cx 1, 0; cx 1, 2; h 2; & 8 \\
0.071280 & 0.00419294 & h 0; h 1; h 3; s 0; cx 1, 0; cx 1, 2; h 2; cx 3, 0; & 8 \\
0.035301 & 0.00207651 & h 0; h 1; h 3; s 0; cx 1, 2; h 1; cx 3, 0; cx 0, 1; & 8 \\
0.191611 & 0.01127122 & h 0; h 1; h 3; s 0; cx 1, 2; h 1; cx 3, 0; cx 0, 2; & 8 \\
0.127911 & 0.00752417 & h 0; h 1; h 3; s 0; cx 1, 2; cx 3, 0; cx 0, 1; h 0; & 8 \\
0.044008 & 0.00258869 & h 0; h 1; h 3; s 0; cx 3, 0; cx 0, 2; cx 1, 0; h 0; & 8 \\
0.063755 & 0.00375029 & h 0; h 1; h 3; s 0; cx 3, 0; cx 0, 2; cx 1, 0; h 1; & 8 \\
0.091413 & 0.00537726 & h 0; h 1; s 0; cx 0, 2; h 2; cx 0, 3; cx 1, 0; h 0; & 8 \\
0.105057 & 0.00617985 & h 0; h 1; s 0; cx 0, 2; h 2; cx 1, 0; cx 0, 3; h 0; & 8 \\
0.038709 & 0.00227701 & h 0; h 1; s 0; cx 0, 3; h 0; cx 0, 2; cx 1, 0; h 1; & 8 \\
0.153584 & 0.00903436 & h 0; h 1; s 0; cx 0, 3; h 0; cx 1, 2; h 1; cx 1, 3; & 8 \\
0.134770 & 0.00792765 & h 0; h 1; s 0; cx 0, 3; h 0; cx 1, 2; h 1; cx 2, 3; & 8 \\
0.330392 & 0.01943483 & h 0; h 1; s 0; cx 0, 3; h 3; cx 1, 0; cx 0, 2; h 0; & 8 \\
0.214438 & 0.01261398 & h 0; h 1; s 0; cx 0, 3; h 3; cx 1, 0; cx 0, 2; h 2; & 8 \\
0.134696 & 0.00792327 & h 0; h 1; s 0; cx 0, 3; h 3; cx 1, 0; cx 1, 2; h 2; & 8 \\
0.205824 & 0.01210727 & h 0; h 1; s 0; cx 0, 3; cx 1, 0; h 1; cx 0, 2; h 0; & 8 \\
0.042706 & 0.00251210 & h 0; h 1; s 0; cx 0, 3; cx 1, 0; h 1; cx 0, 2; h 2; & 8 \\
0.090090 & 0.00529938 & h 0; h 1; s 0; cx 1, 0; cx 0, 3; h 0; cx 1, 2; h 2; & 8 \\
0.072129 & 0.00424286 & h 0; h 1; s 0; cx 1, 0; cx 0, 3; h 3; cx 1, 2; h 1; & 8 \\
0.198947 & 0.01170275 & h 0; h 1; s 1; cx 0, 2; h 0; cx 0, 3; cx 1, 3; h 1; & 8 \\
0.108637 & 0.00639042 & h 0; h 1; s 1; cx 0, 2; h 0; cx 0, 3; cx 1, 3; h 3; & 8 \\
0.151433 & 0.00890781 & h 0; h 1; s 1; cx 0, 2; h 0; cx 1, 0; cx 0, 3; h 0; & 8 \\
0.174774 & 0.01028084 & h 0; h 1; s 1; cx 0, 2; h 0; cx 2, 3; cx 1, 2; h 1; & 8 \\
0.147430 & 0.00867238 & h 0; h 1; s 1; cx 0, 2; cx 0, 3; h 3; cx 1, 2; h 1; & 8 \\
0.080042 & 0.00470837 & h 0; h 1; s 1; cx 0, 3; h 0; cx 1, 2; h 2; cx 0, 2; & 8 \\
0.019889 & 0.00116995 & h 0; h 1; s 1; cx 1, 2; h 2; cx 0, 2; cx 2, 3; h 2; & 8 \\
0.075331 & 0.00443125 & h 0; h 1; s 1; cx 1, 2; h 2; cx 0, 2; cx 2, 3; h 3; & 8 \\
0.039551 & 0.00232652 & h 0; h 2; h 3; s 0; s 2; cx 3, 0; cx 0, 1; h 0; & 8 \\
0.246106 & 0.01447680 & h 0; h 2; h 3; s 0; s 3; cx 2, 0; cx 0, 1; h 1; & 8 \\
0.051552 & 0.00303247 & h 0; h 2; h 3; s 0; cx 2, 0; cx 0, 1; h 0; cx 3, 2; & 8 \\
0.139623 & 0.00821313 & h 0; h 2; s 0; cx 0, 3; h 3; cx 2, 0; cx 0, 1; h 1; & 8 \\
0.099269 & 0.00583935 & h 0; h 2; s 0; cx 2, 0; cx 0, 1; h 0; cx 2, 3; h 2; & 8 \\
0.082586 & 0.00485799 & h 0; h 2; s 2; cx 0, 1; h 0; cx 0, 3; cx 2, 3; h 2; & 8 \\
0.242271 & 0.01425123 & h 0; h 2; s 2; cx 0, 1; h 0; cx 1, 3; cx 2, 3; h 2; & 8 \\
\end{longtable}

\newpage



\bibliographystyle{naturemag}

\bibliography{multiqubit_wire_cutting}

\end{document}